\documentclass{article}

\usepackage[nature]{arxiv}

\usepackage{amsthm}
\theoremstyle{plain}
\newtheorem{proposition}{Proposition}[section]
\newtheorem{corollary}[proposition]{Corollary}
\theoremstyle{remark}
\newtheorem{remark}[proposition]{Remark}

\title{The KV Cache Is the New Memory Wall}

\author{%
  \authororcid{Tejinder~Singh}{0000-0002-5870-6204} \\[1em]
  Dell Technologies, Santa Clara, CA 95054, USA \\[0.5em]
  \texttt{Singh.Tejinder@Dell.com}
}

\date{}

\renewcommand{\headeright}{}
\renewcommand{\undertitle}{}
\renewcommand{\shorttitle}{The KV Cache Is the New Memory Wall}

\hypersetup{
  pdftitle={The KV Cache Is the New Memory Wall},
  pdfsubject={Systems for Machine Learning, LLM Inference, Memory Hierarchies},
  pdfauthor={Tejinder Singh},
  pdfkeywords={KV-cache, roofline model, LLM inference, memory hierarchy, memory wall}
}

\newif\ifshowtoc
\showtocfalse
\showtoctrue

\begin{document}
\maketitle

\begin{abstract}
Autoregressive Large Language Model inference at long context is bounded by memory bandwidth rather than arithmetic throughput, and the binding resource shifts from model weights to the Key-Value (KV) cache as sequence length grows. For Llama-3-70B in BF16, the 140\,GB weight footprint alone exceeds the 80\,GB HBM of a single accelerator, and the KV cache of one 128k-token sequence adds 42\,GB on top. Techniques that compress, evict, page, share, or offload KV state have proliferated, yet reported gains are measured under mutually inconsistent workloads, hardware targets, and quality metrics, which prevents rigorous cross-paper comparison. This Systematization of Knowledge paper unifies the field on analytical ground, with a measurement protocol that keeps derived and reported claims strictly separated. We formalize the exact memory traffic of decode-phase attention and derive closed-form expressions for arithmetic intensity as a decaying function of context length, parameterized by hardware topology for NVIDIA H100, NVIDIA B200, and AMD MI300X, including per-die bandwidth partitioning on multi-die packages and the crossover context lengths at which KV traffic overtakes weight traffic. We classify the literature into five domains: quantization, token eviction, KV paging, prefix caching, and heterogeneous memory tiering. We then evaluate a representative method from each domain on common workloads at 128k context under one measurement protocol, reporting compression factor, memory footprint, derived throughput-speedup bounds, and as-reported quality deltas, with the two evidence classes never mixed. The central finding is a three-regime structure: below a hardware-specific crossover context length, weight traffic dominates and KV compression yields negligible end-to-end speedup. Beyond the crossover, KV traffic dominates and each domain trades a measurable quality delta for bandwidth savings that approach the roofline bound. Paging and prefix sharing are lossless with respect to model output but address capacity rather than bandwidth. Quantization and eviction reduce bandwidth directly, with quality degradation that accelerates below 4-bit precision for naive schemes and that turns discontinuous for eviction on position-sensitive tasks, while structure-aware methods extend both floors. Heterogeneous tiering converts the bandwidth wall into an interconnect problem bounded by PCIe or NVLink fabric throughput rather than HBM. We conclude with design rules for selecting a compression domain given a hardware target, context length, and quality budget, and we enumerate open problems in the co-design of compression formats with attention kernels and serving schedulers.
\end{abstract}

\keywords{KV cache \and LLM inference \and Memory wall \and Roofline model \and KV quantization \and Token eviction \and Paged attention \and Prefix caching \and Memory tiering}

% ------------------------------------------------------------------
% Table of contents
% Toggle comment ruled or no-ruled to toggle ToC style, keep one only
% ------------------------------------------------------------------
\ifshowtoc
\ruledtableofcontents
%\tableofcontents
\clearpage
\fi

\section{Introduction and the Decode-Phase Memory Wall}
\label{sec:introduction}

Autoregressive Large Language Models generate text one token at a time. Every decode step executes a full forward pass over the model and attends over the Key-Value (KV) cache, the tensor pair that stores the key and value projections of every token processed so far \cite{vaswani2017attention}. The arithmetic structure of this step is fixed by the transformer architecture, and its cost structure is fixed by the memory hierarchy of the accelerator that executes it. Test-time reasoning workloads, which generate extended chains of intermediate tokens before producing a final answer, push per-request context lengths from $10^{3}$ toward $10^{5}$ to $10^{6}$ tokens \cite{llama3ai2024llama3}. Under these conditions the cost of decode is the cost of moving bytes from High Bandwidth Memory (HBM) to the compute units, not the cost of arithmetic, a bound that recurs at every layer of the deployment stack \cite{singh2026deployment}. This paper systematizes the techniques that attack that bottleneck.

\subsection{Decode-Phase Arithmetic Intensity}

The decode phase is memory bound by construction. At batch size one, a forward pass over a model with $P$ parameters stored at width $w_{p}$ bytes performs approximately $2P$ floating point operations and moves approximately $P \cdot w_{p}$ weight bytes, giving an arithmetic intensity of $2/w_{p}$ FLOP per byte, or $1$ FLOP/B for BF16 weights. The ridge point of an NVIDIA H100 SXM, with $989.5$ TFLOP/s of dense BF16 throughput and $3.35$ TB/s of HBM3 bandwidth, sits at $I^{*} \approx 295$ FLOP/B \cite{williams2009roofline, nvidia2023h100}. Decode at batch size one therefore operates at $1/295$ of the ridge, and no kernel-level improvement can raise sustained compute utilization above a fraction of one percent.

Attention over the KV cache sits in the same regime. One query token attending over $s$ cached tokens costs $4 s d_{h}$ FLOPs per head, excluding softmax terms of lower order in $d_{h}$, and reads $2 s d_{h} w_{kv}$ bytes per head, an intensity of $2/w_{kv}$ FLOP/B, again $1$ FLOP/B in BF16. Grouped-Query Attention raises this figure by the group factor $g = n_{q}/n_{kv}$ because one stored key-value pair serves $g$ query heads, yet even at $g = 8$ the resulting $8$ FLOP/B remains more than an order of magnitude below the H100 ridge \cite{ainslie2023gqa}.

Batching does not rescue the long-context regime. Weight reads are shared across the batch, so weight-side intensity grows linearly in batch size $b$ and reaches the ridge only near $b \approx 295$ on H100 with BF16 weights. KV reads are not shared. Every sequence owns its cache, so KV bytes scale with $b$ exactly as attention FLOPs do, and the attention-side intensity is invariant to batch size. As context length grows, KV traffic overtakes weight traffic, and the batch-size lever that serving systems rely on for throughput loses its effect. Section~3 derives this crossover exactly.

\subsection{KV-Cache Growth with Context Length}

The KV cache grows linearly in context length. For a model with $L$ layers, $n_{kv}$ key-value heads per layer, head dimension $d_{h}$, and element width $w_{kv}$ bytes, the cache of one sequence of length $s$ occupies

\begin{equation}
B_{kv}(s) = 2 \, L \, n_{kv} \, d_{h} \, s \, w_{kv},
\label{eq:kv_bytes_intro}
\end{equation}

where the factor of two accounts for keys and values. For Llama-3-70B ($L = 80$, $n_{kv} = 8$, $d_{h} = 128$) in BF16, Eq.~\eqref{eq:kv_bytes_intro} yields $327{,}680$ bytes per token. The $140$ GB BF16 weight footprint already exceeds the $80$ GB HBM of a single accelerator, and one 128k-token sequence adds $42$ GB of KV state on top \cite{llama3ai2024llama3, nvidia2023h100}. At $10^{6}$ tokens the same sequence requires $328$ GB, the aggregate HBM of more than four H100 devices. Architectural variants compress the constant but not the scaling law. Multi-Query Attention and Grouped-Query Attention reduce $n_{kv}$ by sharing key-value heads across query heads \cite{shazeer2019mqa, ainslie2023gqa}, and Multi-head Latent Attention projects keys and values into a low-rank latent representation \cite{deepseekai2024deepseekv2}. All three leave $B_{kv}$ linear in $s$. Figure~\ref{fig:kv_scaling} plots Eq.~\eqref{eq:kv_bytes_intro} for the Llama-3 family against single-device HBM capacities. The curves cross device capacity at context lengths that production reasoning workloads already exceed.

% Figure 3. KV cache footprint versus context length, Llama-3 family, BF16.
% Per-token KV bytes from Eq. kv_bytes_intro. 8B = 131072, 70B = 327680, 405B = 516096 B/token.
\begin{figure}[t]
\centering
\begin{tikzpicture}
\begin{loglogaxis}[
  width=0.72\linewidth,
  height=0.5\linewidth,
  xlabel={Context length $s$ (tokens)},
  ylabel={KV cache footprint per sequence (GB)},
  xmin=7e2, xmax=1.5e6,
  ymin=0.1, ymax=1000,
  xtick={1e3,1e4,1e5,1e6},
  xticklabels={1k, 10k, 100k, 1M},
  ytick={0.1,1,10,100,1000},
  yticklabels={0.1, 1, 10, 100, 1000},
  grid=major,
  grid style={draw=papergrid!50, line width=0.3pt},
  axis line style={draw=paperaxis, thick},
  tick label style={font=\small},
  label style={font=\small},
  legend style={font=\scriptsize, draw=papergrid, fill=paperwhite, rounded corners=1.5pt, cells={anchor=west}},
  legend pos=south east,
  clip=false
]

% HBM capacity reference lines, drawn first so model curves sit on top
\addplot[domain=7e2:1.5e6, samples=2, color=papermuted, line width=0.9pt, densely dashed] {80};
\addplot[domain=7e2:1.5e6, samples=2, color=papermuted, line width=0.9pt, dashdotted] {192};

% Llama-3-8B. 0.13 MB/token
\addplot[color=paperprimary, line width=1.2pt, mark=*, mark size=1.6pt] coordinates {
  (1e3,0.131) (8e3,1.049) (3.2e4,4.194) (1.28e5,16.78) (1e6,131.07)
};
% Llama-3-70B. 0.33 MB/token
\addplot[color=papersecondary, line width=1.2pt, mark=square*, mark size=1.6pt] coordinates {
  (1e3,0.328) (8e3,2.621) (3.2e4,10.49) (1.28e5,41.94) (1e6,327.68)
};
% Llama-3.1-405B. 0.52 MB/token
\addplot[color=papertertiary, line width=1.2pt, mark=triangle*, mark size=1.9pt] coordinates {
  (1e3,0.516) (8e3,4.129) (3.2e4,16.51) (1.28e5,66.06) (1e6,516.10)
};

\legend{H100 HBM 80 GB, B200 and MI300X HBM 192 GB, Llama-3-8B (0.13 MB/token), Llama-3-70B (0.33 MB/token), Llama-3.1-405B (0.52 MB/token)}

% Crossing annotation for the 70B model at 128k: ring marker on the point,
% label slope-aligned in the empty channel between the 70B and 8B curves.
\addplot[only marks, mark=o, mark size=3pt, line width=1pt, color=papersecondary, forget plot] coordinates {
  (1.28e5,41.94)
};
\node[font=\scriptsize, text=papersecondary, rotate=27.5]
  at (axis cs:6e4, 12.4) {70B reaches 42\,GB at 128k};

\end{loglogaxis}
\end{tikzpicture}
\caption{KV cache footprint per sequence versus context length for the Llama-3 family in BF16, computed from Eq.~\eqref{eq:kv_bytes_intro} \cite{llama3ai2024llama3}. Horizontal reference lines mark single-device HBM capacity. A Llama-3-70B sequence crosses one 80 GB H100 at 244k tokens of context and a Llama-3.1-405B sequence crosses it at 155k. At $10^{6}$ tokens the 70B and 405B caches reach 328 GB and 516 GB, exceeding the aggregate HBM of four and six H100 devices respectively.}
\label{fig:kv_scaling}
\end{figure}

\subsection{Five Fragmented Mitigation Families}

Five families of techniques attack the KV-cache wall, each through a distinct mechanism and each evaluated largely in isolation. \emph{Quantization} stores keys and values at reduced precision, from 8-bit down to 2-bit per element \cite{liu2024kivi, hooper2024kvquant}. \emph{Token eviction} discards cached tokens predicted to carry little future attention weight, retaining sinks and heavy hitters \cite{xiao2023streamingllm, zhang2023h2o, li2024snapkv}. \emph{KV paging} manages the cache in fixed-size blocks to eliminate fragmentation and enable fine-grained sharing across sequences \cite{kwon2023vllm}. \emph{Prefix caching} reuses the KV state of shared prompt prefixes across requests \cite{zheng2024sglang}. \emph{Heterogeneous memory tiering} migrates KV state across HBM, host DRAM, and NVMe storage, substituting interconnect bandwidth for HBM residency \cite{sheng2023flexgen, liu2024cachegen, lmcache2024}. The mechanisms are complementary in principle, yet the literature evaluates them under mutually inconsistent context lengths, batch sizes, hardware targets, and quality metrics, so reported gains cannot be composed and can barely be compared. Xu et al.~catalog this space descriptively \cite{xu2026kvstrategies}; our aim is not another catalog but the analytical frame that makes the catalog comparable.

\subsection{Comparability Deficiencies in the Published Evidence Base}

Three methodological deficiencies recur across the five families. First, there is no shared analytical model of KV memory traffic, so papers report speedups against baselines whose byte counts differ through batching policy, kernel choice, and cache layout. Second, hardware topology is ignored. NVIDIA B200 packages two compute dies behind an NV-HBI die-to-die link, and AMD MI300X partitions compute across eight accelerator chiplets, so aggregate HBM bandwidth is not uniformly accessible to every streaming multiprocessor \cite{nvidia2024b200, amd2024mi300x}. A technique that reduces KV bytes by a fixed factor can yield markedly different end-to-end speedups depending on how cache blocks map onto dies and chiplets. Third, quality is reported on incompatible scales. Perplexity on one corpus, needle retrieval accuracy on another, and downstream task scores on a third, with almost no study reporting quality and throughput under one protocol. Almost every published number is correct in isolation. Almost none are jointly interpretable.

\subsection{Systematization Goal and Contributions}

The goal of this paper is to place the KV-cache mitigation literature on analytical and empirical ground that permits rigorous comparison. We make four contributions.

\begin{enumerate}[leftmargin=1.5em]
  \item We formalize the KV-cache structure of autoregressive transformers and classify the mitigation literature into five domains, stating the optimization problem each domain solves and the mechanism each representative method applies (Section~2).
  \item We derive closed-form expressions for decode arithmetic intensity as a decaying function of context length, with exact per-kernel FLOP and byte accounting (Section~3).
  \item We instantiate the roofline model with hardware topology constraints for NVIDIA H100, NVIDIA B200, and AMD MI300X, modeling per-die bandwidth partitioning on multi-die packages and deriving the crossover context lengths at which KV traffic overtakes weight traffic (Section~3).
  \item We evaluate one representative method per domain on standardized 128k-context workloads under a single protocol that separates derived bounds from as-reported quality deltas (Section~4).
\end{enumerate}

Section~5 analyzes cross-cutting challenges, including compression-aware kernel design, scheduler interaction, and evaluation gaps, and Section~6 concludes with design rules for selecting a compression domain given a hardware target, a context length, and a quality budget.
\section{KV Cache Anatomy and a Five-Domain Taxonomy of Compression}
\label{sec:taxonomy}

This section fixes notation for the KV cache of an autoregressive transformer, quantifies the per-token byte cost across attention variants, and organizes the mitigation literature into five domains distinguished by which factor of a unified footprint model each one scales.

\subsection{Exact KV-Cache Structure of Autoregressive Transformers}

Consider a decoder-only transformer with $L$ layers, hidden size $d$, $n_{q}$ query heads and $n_{kv}$ key-value heads per layer, and head dimension $d_{h} = d / n_{q}$. At decode step $t$ the network computes, for every layer $\ell$, the projections $q_{t} = x_{t} W_{q}$, $k_{t} = x_{t} W_{k}$, and $v_{t} = x_{t} W_{v}$, and appends $k_{t}$ and $v_{t}$ to the cache tensors
\begin{equation}
K_{\ell}, V_{\ell} \in \mathbb{R}^{\,n_{kv} \times t \times d_{h}}, \qquad \ell = 1, \dots, L.
\label{eq:kv_tensors}
\end{equation}
The attention output for the new token is
\begin{equation}
o_{t} = \mathrm{softmax}\!\left(\frac{q_{t} K_{\ell}^{\top}}{\sqrt{d_{h}}}\right) V_{\ell},
\label{eq:decode_attention}
\end{equation}
evaluated per query head against the shared key-value slices. Two asymmetries define the serving problem. Prefill writes all $s_{0}$ prompt positions in parallel and is compute bound, with intensity near $2 s_{0} / w$ FLOP/B for short, weight-dominated prompts and higher still at long context, where attention FLOPs overtake weight FLOPs. Decode appends one position and reads the entire cache of $t$ positions, with intensity $2 g / w_{kv}$ per layer as derived in Section~1. The cache is therefore write-once per token and read-forever, and its footprint follows Eq.~\eqref{eq:kv_bytes_intro}.

Table~\ref{tab:model_kv} evaluates Eq.~\eqref{eq:kv_bytes_intro} for six production architectures. Three observations follow. First, GQA cuts the per-token cost by the group factor $g$, which is why Llama-3-70B ($g = 8$) stores only $0.33$ MB per token while the smaller MHA-based Llama-2-13B stores $0.82$ MB. Second, the constant grows with depth, so the 405B model pays $0.52$ MB per token despite the same group factor. Third, Multi-head Latent Attention stores a low-rank latent vector of $d_{c} = 512$ dimensions plus $64$ rotary dimensions per layer rather than full key-value pairs, yielding $69{,}120$ bytes per token for DeepSeek-V2, roughly $5\times$ below the GQA designs \cite{deepseekai2024deepseekv2}. Architectural choice shifts the constant by up to $12\times$ across the table. It does not change the linear growth in $s$.

\begin{table}[t]
\centering
\caption{Per-token KV-cache footprint of representative architectures in BF16, computed from Eq.~\eqref{eq:kv_bytes_intro}. $^{\dagger}$MLA stores a compressed latent vector of $(d_{c} + d_{r}) = 576$ dimensions per layer instead of full key-value pairs.}
\label{tab:model_kv}
\footnotesize
\begin{tabular}{llrrrrr}
\toprule
Model & Attn. & $L$ & $n_{q}$ & $n_{kv}$ & $d_{h}$ & MB/token \\
\midrule
Llama-2-13B \cite{touvron2023llama2} & MHA & 40 & 40 & 40 & 128 & 0.82 MB \\
Mixtral 8x7B \cite{jiang2024mixtral} & GQA & 32 & 32 & 8 & 128 & 0.13 MB \\
Llama-3-8B \cite{llama3ai2024llama3} & GQA & 32 & 32 & 8 & 128 & 0.13 MB \\
Llama-3-70B \cite{llama3ai2024llama3} & GQA & 80 & 64 & 8 & 128 & 0.33 MB \\
Llama-3.1-405B \cite{llama3ai2024llama3} & GQA & 126 & 128 & 8 & 128 & 0.52 MB \\
DeepSeek-V2 \cite{deepseekai2024deepseekv2} & MLA & 60 & 128 & 576$^{\dagger}$ & --- & 0.07 MB \\
\bottomrule
\end{tabular}
\end{table}

\subsection{A Unified Footprint Model}

Every mitigation technique in the literature reduces to scaling one factor of the cache footprint or relocating the bytes across the memory hierarchy. We make this explicit with a unified model,
\begin{equation}
B_{kv}^{\mathrm{eff}}(s) = \underbrace{2 \, L \, n_{kv} \, d_{h}}_{\text{architecture}} \cdot \underbrace{w_{q}}_{\text{quantization}} \cdot \underbrace{r}_{\text{eviction}} \cdot \underbrace{s}_{\text{context}} \cdot \underbrace{\sigma^{-1}}_{\text{sharing}} \cdot \underbrace{(1 + \phi)}_{\text{fragmentation}}.
\label{eq:kv_bytes_unified}
\end{equation}
Here $w_{q}$ is the stored element width after quantization, $r \in (0, 1]$ the retention fraction under eviction, $\sigma \ge 1$ the cross-request sharing factor from prefix caching, and $\phi \ge 0$ the fractional waste from allocation granularity. Memory tiering adds a residency vector $\lambda = (\lambda_{\mathrm{hbm}}, \lambda_{\mathrm{dram}}, \lambda_{\mathrm{nvme}})$ that records which fraction of the cache occupies each tier. Tiering does not shrink the byte count. It determines which interconnect carries the bytes on every read.

The model is the spine of the taxonomy. Quantization sets $w_{q}$. Eviction sets $r$. Paging drives $\phi$ toward zero. Prefix caching sets $\sigma$. Tiering sets $\lambda$. The domains are orthogonal by construction at the level of byte accounting and compose multiplicatively in footprint. Quality and kernel interactions do not factor as cleanly, and Sections~4 and~5 analyze the coupling. Figure~\ref{fig:taxonomy} presents the taxonomy with representative methods per domain.

% Figure 2. Five-domain taxonomy of KV-cache mitigation techniques.
% Each domain is annotated with the factor it scales in Eq. kv_bytes_unified.
\begin{figure}[t]
\centering
\begin{tikzpicture}[
  font=\small,
  dom/.style={rectangle, rounded corners=3pt, line width=0.9pt, align=center, inner sep=5pt, text width=2.7cm, font=\scriptsize\bfseries},
  det/.style={rectangle, rounded corners=3pt, draw=papergrid!80!paperaxis, fill=paperbg, align=left, inner sep=5pt, text width=8.9cm, font=\scriptsize}
]
\node[paper secondary node, font=\small\bfseries] (root) at (0,0) {KV-cache\\mitigation};

\node[dom, draw=paperprimary!85!paperblack, fill=paperprimary!12] (d1) at (3.4,3.0) {Quantization\\[1pt] scales $w_{q}$};
\node[dom, draw=papersecondary!85!paperblack, fill=papersecondary!12] (d2) at (3.4,1.5) {Token eviction\\[1pt] scales $r$};
\node[dom, draw=papertertiary!85!paperblack, fill=papertertiary!12] (d3) at (3.4,0) {KV paging\\[1pt] scales $\phi$};
\node[dom, draw=paperaccent!85!paperblack, fill=paperaccent!12] (d4) at (3.4,-1.5) {Prefix caching\\[1pt] scales $\sigma$};
\node[dom, draw=paperquaternary!85!paperblack, fill=paperquaternary!12] (d5) at (3.4,-3.0) {Memory tiering\\[1pt] sets residency $\lambda$};

\node[det, anchor=west] (m1) at (5.6,3.0) {KIVI. 2-bit, per-channel keys, per-token values \cite{liu2024kivi}. KVQuant. Sub-4-bit with outlier isolation \cite{hooper2024kvquant}. QServe W4A8KV4 co-design \cite{lin2024qserve}.};
\node[det, anchor=west] (m2) at (5.6,1.5) {StreamingLLM. Sinks plus sliding window \cite{xiao2023streamingllm}. H2O. Heavy-hitter retention \cite{zhang2023h2o}. SnapKV. Observation-window selection \cite{li2024snapkv}. Keyformer \cite{adnan2024keyformer}. PyramidKV. Per-layer retention budgets \cite{cai2024pyramidkv}. Read-path analog: Quest \cite{tang2024quest}, MInference \cite{jiang2024minference}.};
\node[det, anchor=west] (m3) at (5.6,0) {PagedAttention. Fixed-size blocks, block tables, copy-on-write sharing \cite{kwon2023vllm}. Internal waste bounded to the final block.};
\node[det, anchor=west] (m4) at (5.6,-1.5) {RadixAttention. Radix-tree reuse with cache-aware scheduling \cite{zheng2024sglang}. LMCache cross-request deduplication \cite{lmcache2024}. CacheBlend. Selective re-encoding for chunk reuse \cite{yao2025cacheblend}.};
\node[det, anchor=west] (m5) at (5.6,-3.0) {FlexGen. Offload to host DRAM and NVMe \cite{sheng2023flexgen}. CacheGen. Compressed KV bitstream transport \cite{liu2024cachegen}. LMCache. Pipelined multi-tier transfer \cite{lmcache2024}.};

% Orthogonal trunk-and-branch connectors: horizontal stub from the root,
% shared vertical trunk, horizontal arrow into each domain box.
\draw[paper arrow] (root.east) -- ++(0.55,0) |- (d1.west);
\draw[paper arrow] (root.east) -- ++(0.55,0) |- (d2.west);
\draw[paper arrow] (root.east) -- ++(0.55,0) |- (d3.west);
\draw[paper arrow] (root.east) -- ++(0.55,0) |- (d4.west);
\draw[paper arrow] (root.east) -- ++(0.55,0) |- (d5.west);
\draw[paper arrow] (d1.east) to (m1.west);
\draw[paper arrow] (d2.east) to (m2.west);
\draw[paper arrow] (d3.east) to (m3.west);
\draw[paper arrow] (d4.east) to (m4.west);
\draw[paper arrow] (d5.east) to (m5.west);
\end{tikzpicture}
\caption{Five-domain taxonomy of KV-cache mitigation techniques. Each domain is annotated with the factor it scales in the unified footprint model of Eq.~\eqref{eq:kv_bytes_unified}. Quantization and eviction trade quality for bytes. Paging and prefix caching are lossless and address capacity and duplication. Tiering relocates bytes across the memory hierarchy and shifts the bottleneck from HBM to interconnect bandwidth.}
\label{fig:taxonomy}
\end{figure}

\subsection{Domain I. Precision Reduction of Stored Keys and Values}

Quantization sets $w_{q} = w / c_{q}$ for compression factor $c_{q} \in \{2, 4, 8\}$, mapping 16-bit elements to 8, 4, or 2 bits. The mechanism is not free. Keys exhibit large-magnitude outlier channels that dominate the quantization error, while values are better behaved, which motivates asymmetric layouts. KIVI stores keys per-channel and values per-token at 2 bits, holding a small window of recent tokens at full precision \cite{liu2024kivi}. KVQuant combines per-channel key quantization, pre-RoPE key storage, and dense-and-sparse outlier isolation, reaching 3-bit with sub-0.1 perplexity degradation and 2-bit with modest task-dependent loss \cite{hooper2024kvquant}. QServe demonstrates that the format must be co-designed with the kernels, choosing W4A8KV4 so dequantization stays off the critical path \cite{lin2024qserve}. The bandwidth saving is exact and immediate. Every byte removed from the stored element is a byte not read from HBM on every decode step.

\subsection{Domain II. Token Eviction and Retention Policies}

Eviction sets a retention fraction $r < 1$ by keeping a subset $S_{t} \subseteq \{1, \dots, t\}$ of positions and discarding the rest. The decision rule distinguishes methods. StreamingLLM retains the first few attention-sink tokens plus a sliding window, enabling multi-million-token streams at fixed memory but forfeiting retrieval of evicted middle content \cite{xiao2023streamingllm}. H2O retains heavy hitters selected by accumulated attention scores \cite{zhang2023h2o}. SnapKV selects positions per head from an observation window at the prompt tail, closely matching long-context task accuracy at retention near 25 percent and below on aggregate benchmarks \cite{li2024snapkv}. Keyformer derives discard scores from logit perturbations \cite{adnan2024keyformer}, and FastGen assigns per-head eviction policies \cite{ge2023fastgen}. Two refinements sharpen the retention mechanism. PyramidKV replaces the uniform budget with a per-layer schedule, retaining near-full caches in the lower layers, where attention is diffuse, and shrinking budgets toward the upper layers, where attention concentrates \cite{cai2024pyramidkv}. A parallel family keeps the full cache resident and moves selection from the write path to the read path. Quest selects a query-dependent subset of KV pages at every decode step \cite{tang2024quest}, and MInference classifies heads into dynamic sparse patterns that read only a fraction of the cache during prefill and decode \cite{jiang2024minference}. In the unified model, read-path selection leaves every factor of Eq.~\eqref{eq:kv_bytes_unified} unchanged and instead discounts the per-step traffic of Eq.~\eqref{eq:decode_bytes} by a query-dependent fraction. It buys bandwidth reversibly but offers no capacity relief, so it complements eviction rather than replacing it. The risk is structural. Eviction is irreversible, and tasks whose answers depend on arbitrary past positions degrade discontinuously once $r$ falls below the information density of the prompt.

\subsection{Domain III. Paged Block Management of KV Memory}

Paging scales no per-token factor at all. It eliminates the fragmentation term $\phi$ by allocating the cache in fixed-size blocks of $\tau$ tokens, typically $\tau = 16$, tracked through a block table in the style of virtual memory \cite{kwon2023vllm}. Reservation-based allocation strands 60 to 80 percent of cache memory in over-reservation and fragmentation on production traces, while paging bounds waste to the partially filled final block of each sequence, measured below 4 percent \cite{kwon2023vllm}. Copy-on-write block sharing makes parallel sampling and beam search nearly free in memory. Paging is lossless by construction and is now the default substrate of production serving stacks. Its limit is that it manages bytes without reducing them.

\subsection{Domain IV. Prefix Sharing Across Requests}

Prefix caching raises the sharing factor $\sigma$ by storing the KV state of a prompt prefix once and reusing it across every request that shares that prefix, a pattern common to system prompts, few-shot exemplars, retrieved document context, and multi-turn history \cite{zheng2024sglang}. RadixAttention organizes cached prefixes in a radix tree with LRU eviction and schedules requests to maximize hit rate, reporting multi-fold throughput gains on sharing-heavy workloads \cite{zheng2024sglang}. LMCache generalizes reuse to non-prefix spans, deduplicating KV state across requests and across serving engines \cite{lmcache2024}. CacheBlend targets retrieval workloads, where the reusable state is a set of independently encoded document chunks rather than one contiguous prefix, and restores cross-chunk attention by re-encoding a small selected fraction of tokens, recovering most of the time-to-first-token benefit of full reuse at a fraction of the recompute cost \cite{yao2025cacheblend}. Exact reuse is lossless and additionally saves prefill compute, since shared positions are never re-encoded; selective-recompute variants are near-lossless up to the re-encoded fraction. Its gains are workload dependent. A request stream with no shared structure sees $\sigma = 1$.

\subsection{Domain V. Heterogeneous Tiering of KV State}

Tiering relocates bytes rather than reducing them. The residency vector $\lambda$ records what fraction of the cache occupies each tier, and the per-step cost follows the tier that must be read. FlexGen offloads weights, activations, and KV state to host memory and disk to maximize throughput on memory-limited GPUs \cite{sheng2023flexgen}. CacheGen compresses KV state into compact bitstreams for network transport, cutting transfer size by roughly $4\times$ \cite{liu2024cachegen}. LMCache pipelines KV movement across GPU, CPU, and disk with deduplication, reporting multi-fold time-to-first-token reductions on multi-turn and retrieval workloads \cite{lmcache2024}. The binding constraint is interconnect bandwidth. PCIe Gen5 x16 delivers roughly 64 GB/s, about $50\times$ below H100 HBM bandwidth, and even NVLink at 900 GB/s sits $3.7\times$ below it \cite{nvidia2023h100}. Tiering therefore pays when reuse distance is long, as in prefix reuse over minutes, and starves the decode loop when per-step reads miss HBM.

\subsection{Orthogonality and Composition of the Five Domains}

Table~\ref{tab:taxonomy} summarizes the taxonomy. Because the domains scale orthogonal factors of Eq.~\eqref{eq:kv_bytes_unified}, they compose multiplicatively in bytes; the composed quality cost is not multiplicative and must be validated jointly. A 2-bit quantization ($w_{q} = w/8$) combined with 50 percent retention ($r = 0.5$) and a sharing factor of $\sigma = 4$ reduces the effective footprint to $1/64$ of the uncompressed baseline before tiering is even considered. Composition is not automatic. It requires attention kernels that accept quantized and evicted layouts and schedulers that exploit sharing and tiering without stalling decode, interactions that Section~5 examines. The design problem is also asymmetric. Paging and prefix sharing are lossless and should always be active. The optimization problem lives entirely in the choice of $(w_{q}, r, \lambda)$ under a quality budget.

\begin{table}[t]
\centering
\caption{Five-domain taxonomy of KV-cache mitigation. Each domain scales one factor of Eq.~\eqref{eq:kv_bytes_unified} or relocates bytes across tiers. $^{\ast}$Transport codecs such as CacheGen optionally quantize, making the link lossy by choice.}
\label{tab:taxonomy}
\footnotesize
\begin{tabularx}{\linewidth}{>{\raggedright\arraybackslash}p{1.9cm}>{\raggedright\arraybackslash}p{2.6cm}cc>{\raggedright\arraybackslash}X}
\toprule
Domain & Mechanism & Scales & Lossless & Representative methods \\
\midrule
Quantization & Store keys and values at $w_{q} < w$ & $w_{q}$ & No & KIVI \cite{liu2024kivi}, KVQuant \cite{hooper2024kvquant}, QServe \cite{lin2024qserve} \\
Token eviction & Retain fraction $r$ of $s$ positions & $r$ & No & StreamingLLM \cite{xiao2023streamingllm}, H2O \cite{zhang2023h2o}, SnapKV \cite{li2024snapkv}, PyramidKV \cite{cai2024pyramidkv} \\
KV paging & Fixed-size block allocation & $\phi$ & Yes & PagedAttention \cite{kwon2023vllm} \\
Prefix caching & Deduplicate shared prefixes & $\sigma$ & Yes & RadixAttention \cite{zheng2024sglang}, LMCache \cite{lmcache2024} \\
Memory tiering & Migrate blocks across HBM, DRAM, NVMe & $\lambda$ & Yes$^{\ast}$ & FlexGen \cite{sheng2023flexgen}, CacheGen \cite{liu2024cachegen} \\
\bottomrule
\end{tabularx}
\end{table}
\section{Analytical Derivations of Context-Dependent Arithmetic Intensity}
\label{sec:roofline}

This section derives the cost of a decode step exactly, expresses arithmetic intensity as a function of context length and batch size, instantiates the roofline model with the topology of three production accelerators, and derives the crossover context lengths that separate the weight-bound and KV-bound regimes. All results are closed-form and require only the model constants of Table~\ref{tab:model_kv} and the hardware constants of Table~\ref{tab:hardware}. Numerical instantiations are derived quantities computed from vendor specifications and model constants, not measurements. Section~4 evaluates representative methods against these bounds.

\subsection{Exact FLOP and Byte Accounting for a Decode Step}

A decode step for a batch of $b$ sequences, each at context length $s$, executes per layer the QKV projections at $2 d d_{h} (n_{q} + 2 n_{kv})$ FLOPs, the attention score and value contractions at $4 s d_{h} n_{q}$ FLOPs, the output projection at $2 d d_{h} n_{q}$ FLOPs, and the MLP at $6 d d_{ff}$ FLOPs for a SwiGLU block with three weight matrices. Folding all weight matrices into $P$ and summing over $L$ layers gives the exact per-step FLOP count
\begin{equation}
F(s, b) = b \left( 2P + 4 L n_{q} d_{h} s \right),
\label{eq:decode_flops}
\end{equation}
where the softmax contributes $3 L n_{q} s b$ additional FLOPs, lower order in $d_{h}$, and is omitted. The corresponding byte traffic, assuming a FlashAttention-class kernel that reads the cache exactly once \cite{dao2022flashattention}, is
\begin{equation}
B(s, b) = P w_{p} + 2 b L n_{kv} d_{h} s w_{kv}.
\label{eq:decode_bytes}
\end{equation}
Weights are read once per step regardless of batch size. KV bytes scale with the batch because each sequence owns its cache. Equations~\eqref{eq:decode_flops} and~\eqref{eq:decode_bytes} are the entire cost model. Everything else in this section is algebra over their ratio.

\subsection{Arithmetic Intensity as a Function of Context Length and Batch}

The decode arithmetic intensity follows directly.
\begin{equation}
\mathrm{AI}(s, b) = \frac{F(s, b)}{B(s, b)} = \frac{b \left( 2P + 4 L n_{q} d_{h} s \right)}{P w_{p} + 2 b L n_{kv} d_{h} s w_{kv}}.
\label{eq:ai_general}
\end{equation}
Its two limits define the operating regimes.
\begin{equation}
\lim_{s \to 0} \mathrm{AI}(s, b) = \frac{2b}{w_{p}}, \qquad \lim_{s \to \infty} \mathrm{AI}(s, b) = \frac{2 n_{q}}{n_{kv} w_{kv}} = \frac{2g}{w_{kv}}.
\label{eq:ai_floor}
\end{equation}
The short-context limit grows with batch. The long-context limit is pinned by the GQA group factor and the KV storage precision, and the batch size cancels exactly. The two limits coincide at the fixed point
\begin{equation}
b^{\dagger} = \frac{g \, w_{p}}{w_{kv}},
\label{eq:fixed_point}
\end{equation}
which for BF16 weights, BF16 cache, and $g = 8$ gives $b^{\dagger} = 8$. Below $b^{\dagger}$, longer context raises intensity toward the floor. Above $b^{\dagger}$, longer context lowers intensity toward the floor. Production serving operates at $b \gg b^{\dagger}$, so in practice decode intensity decays with context length, from the batch-amplified weight-regime value $2b/w_{p}$ toward the KV floor $2g/w_{kv}$. Quantizing the cache raises the floor. At 2-bit storage the floor reaches $64$ FLOP/B, still $4.6\times$ below the H100 ridge. No realistic storage precision makes decode compute bound. Figure~\ref{fig:decode_roofline} places these operating points on the roofline.

% Figure 1. Decode-phase roofline with context-dependent operating points.
% Slopes in TFLOP/s. H100 solid, B200 aggregate dashed, B200 per-die dotted.
\begin{figure}[t]
\centering
\begin{tikzpicture}
\begin{axis}[
  width=0.72\linewidth,
  height=0.5\linewidth,
  xlabel={Arithmetic intensity (FLOP/Byte)},
  ylabel={Attainable performance (TFLOP/s)},
  xmode=log, ymode=log,
  xmin=0.5, xmax=8192,
  ymin=1, ymax=4000,
  grid=major,
  grid style={draw=papergrid!50, line width=0.3pt},
  axis line style={draw=paperaxis, thick},
  tick label style={font=\small},
  label style={font=\small},
  legend style={at={(0.965,0.03)}, anchor=south east, font=\scriptsize, draw=papergrid, fill=paperwhite, rounded corners=1.5pt, cells={anchor=west}, inner sep=4pt},
  extra x ticks={295},
  extra x tick labels={295},
  extra x tick style={grid=none, tick label style={font=\scriptsize, text=paperprimary}},
  clip=false
]
% H100 roofline
\addplot[domain=0.5:8192, samples=400, color=paperprimary, line width=1.2pt] {min(3.35*x, 989.5)};
\addlegendentry{H100. 3.35 TB/s}
% B200 aggregate roofline
\addplot[domain=0.5:8192, samples=400, color=papersecondary, line width=1.0pt, densely dashed] {min(8*x, 2250)};
\addlegendentry{B200 aggregate. 8 TB/s}
% B200 per-die roofline
\addplot[domain=0.5:8192, samples=400, color=papersecondary, line width=0.9pt, loosely dotted] {min(4*x, 1125)};
\addlegendentry{B200 per-die. 4 TB/s}

% ceiling labels (bandwidth labels live in the legend, south east)
\node[font=\scriptsize, text=paperprimary, anchor=north west] at (axis cs:330, 950) {H100 ceiling 989.5 TF};
\node[font=\scriptsize, text=papersecondary, anchor=south west] at (axis cs:330, 2350) {B200 ceiling 2250 TF};

% H100 ridge marker; the ridge value 295 is shown as a blue extra x tick
\draw[thin, densely dotted, draw=paperprimary!60] (axis cs:295, 1) to (axis cs:295, 989.5);
\addplot[only marks, mark=o, mark size=2.4pt, color=paperprimary, forget plot] coordinates {(295, 989.5)};

% operating points, Llama-3-70B on H100, from Eq. ai_general
\addplot[only marks, mark=*, mark size=2.4pt, color=paperprimary, forget plot] coordinates {(1.13, 3.78)};
\node[font=\scriptsize, text=paperprimary, anchor=west] at (axis cs:1.24, 2.9) {decode $b{=}1$, $s{=}8$k};

\addplot[only marks, mark=*, mark size=2.4pt, color=paperprimary, forget plot] coordinates {(2.61, 8.76)};
\node[font=\scriptsize, text=paperprimary, anchor=west] at (axis cs:2.85, 6.8) {decode $b{=}1$, $s{=}128$k};

\addplot[only marks, mark=*, mark size=2.4pt, color=paperprimary, forget plot] coordinates {(50.8, 170.2)};
\draw[draw=paperprimary!55, line width=0.4pt] (axis cs:50.8, 158) to (axis cs:50.8, 68);
\node[font=\scriptsize, text=paperprimary, anchor=north] at (axis cs:50.8, 64) {decode $b{=}256$, $s{=}8$k};

\addplot[only marks, mark=square*, mark size=2.2pt, color=papertertiary, forget plot] coordinates {(8, 26.8)};
\node[font=\scriptsize, text=papertertiary, anchor=south west] at (axis cs:8.6, 20) {BF16 KV floor $2g/w$};

\addplot[only marks, mark=square*, mark size=2.2pt, color=paperaccent, forget plot] coordinates {(64, 214.4)};
\node[font=\scriptsize, text=paperaccent, anchor=west] at (axis cs:72, 200) {2-bit KV floor};

\addplot[only marks, mark=triangle*, mark size=2.6pt, color=papersecondary, forget plot] coordinates {(4096, 692.7)};
\node[font=\scriptsize, text=papersecondary, anchor=north] at (axis cs:4096, 620) {prefill, $s_{0}{=}4$k};

\end{axis}
\end{tikzpicture}
\caption{Roofline with decode operating points for Llama-3-70B on H100, computed from Eq.~\eqref{eq:ai_general} and Table~\ref{tab:hardware}. Solid line. H100 roofline. Dashed and dotted lines. B200 aggregate and per-die rooflines, showing the bandwidth penalty when a sequence is pinned to one die. Marked ceilings are BF16 dense compute rates. Decode points sit between sixfold and more than two-hundredfold below the ridge across batch sizes and context lengths. Only prefill reaches the compute-bound regime. The prefill point uses a 4k-token prompt at 70 percent MFU, matching Figure~\ref{fig:prefix}.}
\label{fig:decode_roofline}
\end{figure}

A second consequence follows. The weight-regime intensity reaches the ridge only at $b \approx I^{*} w_{p} / 2 \approx 295$ on H100 with BF16 weights, but at $b = 295$ the crossover context length for Llama-3-70B has already fallen to $1.4$k tokens, as Table~\ref{tab:crossover} extrapolates. The batch size at which weights could become compute bound is precisely the batch size at which any context beyond a single page of tokens is KV bound. Decode is memory bound everywhere that matters.

\subsection{Hardware Topology and Per-Die Bandwidth Partitioning}

The roofline bound $T_{\mathrm{step}} \ge B(s, b)/\beta$ holds with $\beta$ equal to aggregate HBM bandwidth only when every byte is locally accessible to every compute unit at that rate. Table~\ref{tab:hardware} shows that this assumption fails on current multi-die packages.

\begin{table}[t]
\centering
\caption{Hardware specification of the three benchmarked accelerators \cite{nvidia2023h100, nvidia2024b200, amd2024mi300x}. Compute figures are dense tensor core rates in TFLOP/s. Per-partition bandwidth is aggregate bandwidth divided by die or stack count. Ridge point computed against BF16 dense compute. MI300X on-package fabric bandwidth is not publicly disclosed. Its inter-GPU Infinity Fabric runs at 896 GB/s.}
\label{tab:hardware}
\footnotesize
\begin{tabular}{llrrrrrr}
\toprule
Device & Package & HBM & $\beta_{\mathrm{agg}}$ & $\beta_{\mathrm{part}}$ & BF16 & FP8 & $I^{*}$ \\
\midrule
H100 SXM & Monolithic & 80 GB & 3.35 TB/s & 3.35 TB/s & 989.5 & 1979 & 295 \\
B200 & 2 dies, NV-HBI 10 TB/s & 192 GB & 8.0 TB/s & 4.0 TB/s & 2250 & 4500 & 281 \\
MI300X & 8 XCDs, Infinity Fabric & 192 GB & 5.3 TB/s & 0.66 TB/s & 1307 & 2615 & 247 \\
\bottomrule
\end{tabular}
\end{table}

We model a package as $D$ partitions, each with local HBM bandwidth $\beta_{p} = \beta_{\mathrm{agg}}/D$, joined by a die-to-die fabric of bandwidth $\beta_{f}$. A decode step that reads $x$ KV bytes with fraction $\gamma$ resident locally completes the read, under ideal overlap, in
\begin{equation}
T_{\mathrm{read}}(x, \gamma) = \max\!\left(\frac{\gamma x}{\beta_{p}},\; \frac{(1-\gamma) x}{\beta_{f}}\right),
\qquad
\beta_{\mathrm{eff}} = \frac{x}{T_{\mathrm{read}}}.
\label{eq:beta_eff}
\end{equation}
Two placement extremes bound the behavior. Striped placement ($\gamma = 1/D$) on B200 gives $T_{\mathrm{read}} = \max(x/8\,\mathrm{TB/s},\, x/20\,\mathrm{TB/s}) = x/8\,\mathrm{TB/s}$, recovering the full aggregate because NV-HBI at 10 TB/s outruns the per-die rate of 4 TB/s. Pinned placement ($\gamma = 1$) gives $\beta_{\mathrm{eff}} = \beta_{p} = 4$ TB/s, half the aggregate. On MI300X the pinned penalty reaches $8\times$, since one stack slice supplies $0.66$ TB/s. Figure~\ref{fig:topology} depicts the three topologies. The consequence for serving is that KV page placement is a first-order performance decision on multi-die hardware, yet current serving stacks allocate pages without die or stack affinity. We return to this gap in Section~5.

% Figure 10. Package topology and per-partition bandwidth for H100, B200, MI300X.
% Per-die and per-stack splits derived from Table hardware. R_inf ceilings from Eq. tokrate.
\begin{figure}[t]
\centering
\begin{tikzpicture}[
  font=\scriptsize,
  die/.style={rectangle, rounded corners=3pt, draw=paperprimary!85!paperblack, fill=paperprimary!10, line width=0.9pt, align=center, inner sep=5pt},
  bar/.style={rectangle, rounded corners=2pt, draw=papertertiary!85!paperblack, fill=papertertiary!12, line width=0.9pt},
  fab/.style={circle, draw=paperaccent!85!paperblack, fill=paperaccent!12, line width=0.9pt, align=center, inner sep=3pt, minimum size=1.1cm, font=\tiny},
  xcd/.style={rectangle, rounded corners=2pt, draw=paperquaternary!85!paperblack, fill=paperquaternary!12, line width=0.8pt, align=center, inner sep=2.5pt, font=\tiny},
  dbl/.style={{Latex[length=2mm]}-{Latex[length=2mm]}, line width=1.0pt, draw=paperaxis},
  spoke/.style={{Latex[length=1.3mm]}-{Latex[length=1.3mm]}, line width=0.5pt, draw=paperaxis!70},
  mem/.style={font=\tiny, text=papertertiary!40!paperblack},
  note/.style={font=\scriptsize, text=papermuted, align=center},
  panellabel/.style={font=\small\bfseries, text=paperaxis}
]

% ---------- H100 SXM ----------
\node[panellabel] at (2.4, 3.0) {H100 SXM};
\node[die, text width=2.9cm] (h100die) at (2.4, 1.95) {Compute die, monolithic\\989.5 TF BF16};
\draw[dbl] (h100die.south) to node[right, font=\scriptsize] {3.35 TB/s} (2.4, 0.82);
\draw[bar] (0.55, 0.22) rectangle (4.25, 0.82);
\foreach \x in {1.29, 2.03, 2.77, 3.51} \draw[papertertiary!85!paperblack, line width=0.6pt] (\x, 0.22) -- (\x, 0.82);
\node[mem] at (2.4, 0.52) {80 GB HBM3};
\node[note, font=\tiny] at (2.4, -0.28) {HBM3: 5 $\times$ 16 GB. 0.67 TB/s per stack};
\node[note] at (2.4, -1.02) {monolithic: no placement penalty\\$R_{\infty} = 10.2$M tok/s};

% ---------- B200 ----------
\node[panellabel] at (8.3, 3.0) {B200};
\node[die, text width=2.1cm] (b200d0) at (6.55, 1.95) {Die 0\\1125 TF BF16};
\node[die, text width=2.1cm] (b200d1) at (10.05, 1.95) {Die 1\\1125 TF BF16};
\draw[dbl, line width=1.3pt, draw=papersecondary] (b200d0.east) -- (b200d1.west);
\node[font=\tiny, text=papersecondary, align=center] at (8.3, 1.62) {NV-HBI\\10 TB/s};
\draw[dbl] (b200d0.south) to node[left, font=\scriptsize] {4 TB/s} (6.55, 0.82);
\draw[dbl] (b200d1.south) to node[right, font=\scriptsize] {4 TB/s} (10.05, 0.82);
\draw[bar] (5.35, 0.22) rectangle (7.75, 0.82);
\foreach \x in {5.95, 6.55, 7.15} \draw[papertertiary!85!paperblack, line width=0.6pt] (\x, 0.22) -- (\x, 0.82);
\node[mem] at (6.55, 0.52) {96 GB HBM3e};
\draw[bar] (8.85, 0.22) rectangle (11.25, 0.82);
\foreach \x in {9.45, 10.05, 10.65} \draw[papertertiary!85!paperblack, line width=0.6pt] (\x, 0.22) -- (\x, 0.82);
\node[mem] at (10.05, 0.52) {96 GB HBM3e};
\node[note, font=\tiny] at (8.3, -0.28) {HBM3e: 8 $\times$ 24 GB. 1 TB/s per stack};
\node[note] at (8.3, -1.02) {striped 8 TB/s: $R_{\infty} = 24.4$M tok/s\\pinned 4 TB/s: $R_{\infty} = 12.2$M tok/s ($2\times$)};

% ---------- MI300X ----------
\node[panellabel] at (13.7, 3.0) {MI300X};
\node[fab] (fabric) at (13.7, 1.35) {Infinity\\Fabric};
\foreach \i in {0,...,7} {
  \node[xcd] (x\i) at ({13.7+1.4*cos(45*\i+90)}, {1.35+1.02*sin(45*\i+90)}) {XCD \i\\163 TF};
}
\foreach \i in {0,...,7} {
  \draw[spoke] (fabric) -- (x\i);
}
\node[note, font=\tiny] at (13.7, -0.28) {8 XCDs. HBM3: 8 $\times$ 24 GB. 662.5 GB/s per stack};
\node[note] at (13.7, -1.02) {striped 5.3 TB/s: $R_{\infty} = 16.2$M tok/s\\pinned 0.66 TB/s: $R_{\infty} = 2.0$M tok/s ($8\times$)};

\end{tikzpicture}
\caption{Package topology of the three benchmarked accelerators \cite{nvidia2023h100, nvidia2024b200, amd2024mi300x}. H100 presents a monolithic die with uniform HBM access. B200 splits compute and HBM across two dies joined by NV-HBI at 10 TB/s, so a KV cache pinned to one die is served at 4 TB/s while striped pages reach the 8 TB/s aggregate. MI300X partitions compute across eight XCD chiplets with a 662.5 GB/s per-stack bandwidth slice; its on-package Infinity Fabric bandwidth is not publicly disclosed. The per-panel decode ceilings $R_{\infty}$ of Eq.~\eqref{eq:tokrate} quantify the placement penalty: pinning costs $2\times$ on B200 and $8\times$ on MI300X, making KV page placement a first-order performance decision.}
\label{fig:topology}
\end{figure}

The per-sequence decode rate at long context approaches a hardware ceiling obtained by dropping the weight term from Eq.~\eqref{eq:decode_bytes}.
\begin{equation}
R_{\infty} = \frac{\beta_{\mathrm{eff}}}{2 L n_{kv} d_{h} w_{kv}}.
\label{eq:tokrate}
\end{equation}
For Llama-3-70B in BF16 this ceiling is $10.2$ M tokens per second on H100, $24.4$ M on B200 with striped pages, $12.2$ M on B200 pinned to one die, $16.2$ M on MI300X striped, and $2.0$ M on MI300X pinned. Topology awareness is worth up to $8\times$ on the ceiling before any compression is applied.

\subsection{Crossover Context Lengths}

Equating the two terms of Eq.~\eqref{eq:decode_bytes} gives the traffic crossover, the context length at which per-step KV bytes equal per-step weight bytes.
\begin{equation}
s^{*}(b) = \frac{P \, w_{p}}{2 \, b \, L \, n_{kv} \, d_{h} \, w_{kv}}.
\label{eq:crossover_traffic}
\end{equation}
Table~\ref{tab:crossover} evaluates Eq.~\eqref{eq:crossover_traffic} for the Llama-3 family. The crossover falls hyperbolically in batch size. At the production batch sizes of $32$ to $256$ used by throughput-oriented serving, Llama-3-70B crosses at $13.4$k down to $1.7$k tokens, meaning long-context serving at scale is KV bound essentially always. Figure~\ref{fig:crossover} plots the three curves and marks the 128k operating point.

\begin{table}[t]
\centering
\caption{Traffic crossover context length $s^{*}(b)$ in tokens from Eq.~\eqref{eq:crossover_traffic}, BF16 weights and cache. Above the tabulated value, per-step KV traffic exceeds weight traffic.}
\label{tab:crossover}
\footnotesize
\begin{tabular}{lrrrrr}
\toprule
Model & $b{=}1$ & $b{=}8$ & $b{=}32$ & $b{=}128$ & $b{=}256$ \\
\midrule
Llama-3-8B & 122.5k & 15.3k & 3.8k & 957 & 479 \\
Llama-3-70B & 427.2k & 53.4k & 13.4k & 3.3k & 1.7k \\
Llama-3.1-405B & 1.57M & 196.2k & 49.0k & 12.3k & 6.1k \\
\bottomrule
\end{tabular}
\end{table}

A second crossover governs capacity rather than traffic. The context length at which the cache exhausts device memory, with weights co-resident, is
\begin{equation}
s_{\mathrm{cap}} = \frac{C_{\mathrm{hbm}} - P \, w_{p}}{2 L n_{kv} d_{h} w_{kv}}.
\label{eq:crossover_capacity}
\end{equation}
For Llama-3-8B in BF16, $s_{\mathrm{cap}}$ is $488$k tokens on one H100 and $1.34$M on one B200 or MI300X. For Llama-3-70B in BF16 the numerator is negative on H100, so the model does not fit at any context length, and on B200 or MI300X the limit is $159$k tokens. For Llama-3.1-405B the limit is negative on all three devices, forcing multi-device placement or weight quantization before context length even enters the discussion. Capacity, not bandwidth, binds first at the largest scale, which is precisely the regime where the tiering domain of Section~2 becomes necessary.

% Figure 9. Traffic crossover context length versus batch size, Llama-3 family.
% s*(b) from Eq. crossover_traffic. KV-dominated regime lies above each curve.
\begin{figure}[t]
\centering
\begin{tikzpicture}
\begin{loglogaxis}[
  width=0.72\linewidth,
  height=0.5\linewidth,
  xlabel={Batch size $b$},
  ylabel={Crossover context length $s^{*}$ (tokens)},
  xmin=1, xmax=512,
  ymin=60, ymax=2e6,
  xtick={1,8,32,128,512},
  xticklabels={1, 8, 32, 128, 512},
  ytick={1e2,1e3,1e4,1e5,1e6},
  yticklabels={100, 1k, 10k, 100k, 1M},
  grid=major,
  grid style={draw=papergrid!50, line width=0.3pt},
  axis line style={draw=paperaxis, thick},
  tick label style={font=\small},
  label style={font=\small},
  legend style={font=\scriptsize, draw=papergrid, fill=paperwhite, rounded corners=1.5pt},
  legend pos=south west,
  clip=false
]
\addplot[color=paperprimary, line width=1.2pt, mark=*, mark size=1.6pt] coordinates {
  (1,122528) (8,15316) (32,3829) (128,957) (256,479) (512,239)
};
\addplot[color=papersecondary, line width=1.2pt, mark=square*, mark size=1.6pt] coordinates {
  (1,427246) (8,53406) (32,13351) (128,3338) (256,1669) (512,834)
};
\addplot[color=papertertiary, line width=1.2pt, mark=triangle*, mark size=1.9pt] coordinates {
  (1,1569520) (8,196190) (32,49048) (128,12262) (256,6131) (512,3065)
};
\legend{Llama-3-8B, Llama-3-70B, Llama-3.1-405B}

% 128k reference line
\addplot[domain=1:512, samples=2, color=papermuted, densely dashed, line width=0.8pt] {131072};
\node[font=\scriptsize, text=papermuted, anchor=south west] at (axis cs:100, 1.4e5) {128k context};

% region labels
\node[font=\scriptsize\itshape, text=paperaxis, anchor=west] at (axis cs:3.2, 1.1e6) {KV-traffic dominated};
\node[font=\scriptsize\itshape, text=paperaxis] at (axis cs:80, 300) {weight-traffic dominated};

\end{loglogaxis}
\end{tikzpicture}
\caption{Traffic crossover context length $s^{*}(b)$ from Eq.~\eqref{eq:crossover_traffic}, where per-step KV bytes equal weight bytes. Above each curve the decode step is KV bound. Production batch sizes of 32 to 256 push the Llama-3-70B crossover from 13.4k down to 1.7k tokens, so most long-context serving already operates in the KV-bound regime. The dashed line marks 128k context.}
\label{fig:crossover}
\end{figure}

\subsection{Worked Bounds and the Speedup Structure of Compression}

In the memory-bound regime, step time is bytes over bandwidth, so the speedup from any technique that reduces KV bytes by a factor $c$ is the byte ratio alone.
\begin{equation}
S(c;\, s, b) = \frac{P w_{p} + 2 b L n_{kv} d_{h} w_{kv} s}{P w_{p} + 2 b L n_{kv} d_{h} (w_{kv}/c)\, s},
\qquad
\lim_{s \to 0} S = 1, \qquad
\lim_{s \to \infty} S = c.
\label{eq:speedup}
\end{equation}
These two limits explain most of the disagreement in reported speedups. In the weight-bound regime, compression is diluted by the weight term and its speedup vanishes toward one. In the KV-bound regime, the speedup converges to the compression factor exactly. Table~\ref{tab:worked_decode} instantiates Eq.~\eqref{eq:speedup} for Llama-3-70B on B200 at batch one with striped pages. At 128k context, below the crossover of $427.2$k, 2-bit quantization yields $1.25\times$. At 512k, above the crossover, the same technique yields $1.91\times$, and stacking 50 percent eviction reaches $2.05\times$. The identical algorithm differs in effectiveness by more than fifty percent depending only on which side of $s^{*}$ the workload sits. This is the analytical explanation for the inconsistent speedup claims cataloged in Section~1.

\begin{table}[t]
\centering
\caption{Worked roofline bounds for Llama-3-70B on B200 at batch one with striped pages, derived from Eqs.~\eqref{eq:decode_flops}, \eqref{eq:decode_bytes}, and~\eqref{eq:ai_general}. Derived quantities, not measurements. $T_{\mathrm{step}} = B/\beta$ with $\beta = 8.0$ TB/s; absolute times scale with realized bandwidth and ratios do not. Speedup relative to the BF16 baseline at the same context. The 128k block fits on one B200 ($181.9$ GB $\le 192$ GB). At 512k the BF16 and INT8 configurations exceed single-device capacity (307.8 and 223.9 GB against 192 GB) and are realized by sharding across two devices; the INT4 and 2-bit rows fit on one B200, and all speedup ratios are invariant to the shard count.}
\label{tab:worked_decode}
\footnotesize
\begin{tabular}{lrrrrrr}
\toprule
Configuration & $w_{kv}$ & KV (GB) & $T_{\mathrm{step}}$ (ms) & tok/s & Speedup & AI \\
\midrule
\multicolumn{7}{l}{\emph{Context $s = 128$k, below crossover $s^{*} = 427.2$k.}} \\
BF16 baseline & 16 bit & 41.9 & 22.7 & 44.0 & 1.00$\times$ & 2.6 \\
INT8 cache & 8 bit & 21.0 & 20.1 & 49.7 & 1.13$\times$ & 3.0 \\
INT4 cache & 4 bit & 10.5 & 18.8 & 53.2 & 1.21$\times$ & 3.2 \\
2-bit cache & 2 bit & 5.2 & 18.2 & 55.1 & 1.25$\times$ & 3.3 \\
2-bit plus $r = 0.5$ eviction & 2 bit & 2.6 & 17.8 & 56.1 & 1.28$\times$ & 3.3 \\
\midrule
\multicolumn{7}{l}{\emph{Context $s = 512$k, above crossover.}} \\
BF16 baseline & 16 bit & 167.8 & 38.5 & 26.0 & 1.00$\times$ & 4.8 \\
INT8 cache & 8 bit & 83.9 & 28.0 & 35.7 & 1.37$\times$ & 6.6 \\
INT4 cache & 4 bit & 41.9 & 22.7 & 44.0 & 1.69$\times$ & 8.1 \\
2-bit cache & 2 bit & 21.0 & 20.1 & 49.7 & 1.91$\times$ & 9.2 \\
2-bit plus $r = 0.5$ eviction & 2 bit & 10.5 & 18.8 & 53.2 & 2.05$\times$ & 9.8 \\
\bottomrule
\end{tabular}
\end{table}

Three regimes follow. Below $s^{*}$, weight traffic dominates and KV compression yields bounded speedups approaching one. Above $s^{*}$, KV traffic dominates and byte reductions translate one-to-one into speedup until quality constraints bind. Beyond $s_{\mathrm{cap}}$, no placement exists at all and tiering or multi-device parallelism becomes mandatory. Every technique in Section~2 can be located on this map, and Section~4 measures a representative of each domain against it.
% ======================================================================
%  sections/sec_formal_statements.tex
%  Formal statements and proofs of the Section 3 analytical results.
%
%  Wiring: \input{sections/sec_formal_statements} at the END of
%  Section 3, immediately after the closing paragraph "Three regimes
%  follow ..." and before Section 4. Numbering is automatic: the
%  subsection becomes 3.6 and the statements become Proposition 3.1
%  and so on. No main-text section or table number moves.
%
%  Preamble additions (once, in main.tex):
%    \usepackage{amsthm}
%    \theoremstyle{plain}
%    \newtheorem{proposition}{Proposition}[section]
%    \newtheorem{corollary}[proposition]{Corollary}
%    \theoremstyle{remark}
%    \newtheorem{remark}[proposition]{Remark}
% ======================================================================

\subsection{Formal Statements of the Analytical Results}
\label{sec:formal}

The derivations of this section are accounting identities over FLOPs and bytes rather than fits to measurements, and we state their content formally. Propositions~\ref{prop:ai_limits}--\ref{prop:beta_eff} assume the cost model of Eqs.~\eqref{eq:decode_flops} and~\eqref{eq:decode_bytes}, the model constants of Table~\ref{tab:model_kv}, the device constants of Table~\ref{tab:hardware}, and the placement model of Eq.~\eqref{eq:beta_eff}. Appendix~\ref{app:derivations} carries the supporting arithmetic in full.

\begin{proposition}[Decode intensity limits]
\label{prop:ai_limits}
Let $\mathrm{AI}(s,b) = F(s,b)/B(s,b)$ as in Eq.~\eqref{eq:ai_general} and let $g = n_{q}/n_{kv}$. Then
\begin{enumerate}[leftmargin=1.5em]
  \item $\lim_{s \to 0} \mathrm{AI}(s,b) = 2b/w_{p}$ (weight regime);
  \item $\lim_{s \to \infty} \mathrm{AI}(s,b) = 2g/w_{kv}$ (KV floor), with the batch size canceling exactly;
  \item for fixed $b$, the map $s \mapsto \mathrm{AI}(s,b)$ is strictly increasing if $b < b^{\dagger}$, constant if $b = b^{\dagger}$, and strictly decreasing if $b > b^{\dagger}$, where $b^{\dagger} = g w_{p}/w_{kv}$ is the fixed point of Eq.~\eqref{eq:fixed_point};
  \item for fixed $s > 0$, the map $b \mapsto \mathrm{AI}(s,b)$ is strictly increasing with least upper bound
  \[
  \sup_{b} \mathrm{AI}(s,b) = \frac{2g}{w_{kv}} + \frac{P}{L n_{kv} d_{h} w_{kv} s}.
  \]
\end{enumerate}
\end{proposition}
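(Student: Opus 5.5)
The plan is to write $\mathrm{AI}(s,b)$ as a linear-fractional function of each variable separately and read every item off the sign of a single determinant. Set $\alpha = 2bP$, $\beta = 4bLn_{q}d_{h}$, $\gamma = Pw_{p}$, and $\delta = 2bLn_{kv}d_{h}w_{kv}$. With this notation Eq.~\eqref{eq:ai_general} becomes $\mathrm{AI}(s,b) = (\alpha + \beta s)/(\gamma + \delta s)$, and all four constants are strictly positive.

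\textbf{Items 1 and 2.} Item~1 is continuity at $s=0$, which gives $\alpha/\gamma = 2b/w_{p}$. Item~2 follows by dividing numerator and denominator by $s$, which gives $\beta/\delta = 4bLn_{q}d_{h}/(2bLn_{kv}d_{h}w_{kv})$. Here $b$, $L$ and $d_{h}$ cancel identically, leaving $2g/w_{kv}$. This cancellation is the ``batch cancels exactly'' claim.

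\textbf{Item 3.} Differentiate in $s$:
\[
\partial_s \mathrm{AI} = \frac{\beta\gamma - \alpha\delta}{(\gamma+\delta s)^2}.
\]
The sign is that of
\[
\beta\gamma - \alpha\delta = 4bLn_{q}d_{h}Pw_{p} - 4b^{2}PLn_{kv}d_{h}w_{kv} = 4bPLn_{kv}d_{h}w_{kv}\,(g w_{p}/w_{kv} - b).
\]
This is positive, zero, or negative exactly as $b$ is below, equal to, or above $b^{\dagger}$. A derivative of constant sign then gives strict monotonicity, or constancy when the derivative vanishes identically.

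\textbf{Item 4.} Fix $s>0$ and regroup in $b$. Write $\mathrm{AI} = bA/(\gamma + bD)$ with $A = 2P + 4Ln_{q}d_{h}s$ and $D = 2Ln_{kv}d_{h}w_{kv}s$. Its derivative in $b$ is $A\gamma/(\gamma+bD)^2 > 0$, so the map is strictly increasing. The function is bounded above by $A/D$ and converges to $A/D$ as $b\to\infty$, so $A/D$ is the least upper bound, attained only in the limit. Expanding,
\[
\frac{A}{D} = \frac{4Ln_{q}d_{h}s}{2Ln_{kv}d_{h}w_{kv}s} + \frac{2P}{2Ln_{kv}d_{h}w_{kv}s} = \frac{2g}{w_{kv}} + \frac{P}{Ln_{kv}d_{h}w_{kv}s},
\]
which is exactly the stated supremum.

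There is no real obstacle here. The only care needed is bookkeeping. First, $b$ must be treated as a continuous positive parameter, or else the monotonicity must be argued on the integers via the same sign. Second, the factor $(g w_{p}/w_{kv} - b)$ must be isolated with the correct sign orientation in item~3.
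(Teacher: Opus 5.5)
Your proof is correct and is essentially the paper's own argument: the paper likewise abbreviates the constants and takes the two limits for items (i)--(ii). It also shows that the $s$-derivative has the sign of $g w_{p} - b w_{kv}$, and for item (iv) it uses a positive $b$-derivative together with the $b \to \infty$ limit as the supremum. Your explicit bound $bA/(\gamma + bD) < A/D$ makes the supremum step slightly more self-contained than the paper's appeal to monotonicity, but the route is the same.
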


\begin{proof}
Write $W = P w_{p}$, $a = 4 L n_{q} d_{h}$, and $k = 2 L n_{kv} d_{h} w_{kv}$, so that $\mathrm{AI}(s,b) = b(2P + a s)/(W + b k s)$ with all constants positive. Claims (i) and (ii) follow by dividing numerator and denominator by $s$ and taking limits. For (iii), differentiating in $s$ gives
\[
\frac{\partial \mathrm{AI}}{\partial s} = \frac{b\,(aW - 2Pbk)}{(W + bks)^{2}} = \frac{4 P L n_{kv} d_{h}\, b\, (g w_{p} - b w_{kv})}{(W + bks)^{2}},
\]
whose sign is the sign of $g w_{p} - b w_{kv}$, hence positive, zero, or negative as $b < b^{\dagger}$, $b = b^{\dagger}$, or $b > b^{\dagger}$. For (iv),
\[
\frac{\partial \mathrm{AI}}{\partial b} = \frac{(2P + a s)\, W}{(W + bks)^{2}} > 0,
\qquad
\lim_{b \to \infty} \mathrm{AI}(s,b) = \frac{2P + a s}{k s} = \frac{a}{k} + \frac{2P}{k s} = \frac{2g}{w_{kv}} + \frac{P}{L n_{kv} d_{h} w_{kv} s},
\]
and the limit is a supremum by strict monotonicity.
\end{proof}

\begin{remark}[Span of the batch lever]
\label{rem:batch_span}
For Llama-3-70B in BF16 at 128k context, Proposition~\ref{prop:ai_limits}(ii) and~(iv) bound the intensity of any decode configuration between $8.0$ and $11.3$ FLOP/B. The batch lever moves decode intensity by at most $42$ percent at this context, and both ends sit $25\times$ to $35\times$ below the B200 ridge of $281$ FLOP/B. This is the formal content of the claim that no batch size makes long-context decode compute bound.
\end{remark}

\begin{proposition}[Traffic crossover]
\label{prop:crossover}
Let the KV share of per-step traffic be $\eta(s,b) = 2 b L n_{kv} d_{h} w_{kv} s / B(s,b)$ and let $s^{*}(b)$ be as in Eq.~\eqref{eq:crossover_traffic}. Then
\begin{enumerate}[leftmargin=1.5em]
  \item $\eta(s,b) = s/(s + s^{*}(b))$; in particular $\eta = \tfrac{1}{2}$ exactly at $s = s^{*}(b)$, and $\eta$ is strictly increasing in $s$ and in $b$;
  \item $s^{*}(b) = s^{*}(1)/b$: the crossover is hyperbolic in batch size, so batching accelerates the onset of the KV-bound regime rather than postponing it;
  \item the weight-side intensity reaches the ridge $I^{*}$ only at $b = I^{*} w_{p}/2$, and
  \[
  s^{*}\!\left(\tfrac{I^{*} w_{p}}{2}\right) = \frac{P}{I^{*} L n_{kv} d_{h} w_{kv}}.
  \]
\end{enumerate}
\end{proposition}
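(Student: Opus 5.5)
The plan mirrors the proof of Proposition~\ref{prop:ai_limits}: introduce the abbreviations $W = P w_{p}$ and $k = 2 L n_{kv} d_{h} w_{kv}$, so that Eq.~\eqref{eq:decode_bytes} reads $B(s,b) = W + bks$ and Eq.~\eqref{eq:crossover_traffic} reads $s^{*}(b) = W/(bk)$. All three claims then reduce to short algebra on these two expressions.

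For (i), write $\eta(s,b) = bks/(W + bks)$. Dividing numerator and denominator by $bk$ gives $\eta = s/(s + W/(bk)) = s/(s + s^{*}(b))$. Setting $s = s^{*}(b)$ yields $\eta = \tfrac{1}{2}$. For monotonicity, treat $s$ and $b$ as positive reals, which is the domain on which the paper already differentiates in Proposition~\ref{prop:ai_limits}.
\begin{itemize}
  \item In $s$: the map $x \mapsto x/(x + c)$ with $c > 0$ has derivative $c/(x+c)^{2} > 0$.
  \item In $b$: $\eta = 1 - W/(W + bks)$, and the denominator is strictly increasing in $b$ for $s > 0$.
\end{itemize}
Since $\eta$ is undefined-free only for $s > 0$ in the second case, I will state that restriction explicitly, because at $s = 0$ the share is identically zero.

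For (ii), the identity $s^{*}(b) = W/(bk) = (W/k)/b = s^{*}(1)/b$ is immediate. The interpretive clause follows from (i): for fixed $s$, increasing $b$ decreases $s^{*}(b)$ and so increases $\eta$. Larger batches therefore cross into the KV-dominated regime earlier.

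For (iii), I first recall from Proposition~\ref{prop:ai_limits}(i) that the weight-side intensity is $2b/w_{p}$. Setting it equal to $I^{*}$ gives $b = I^{*} w_{p}/2$. Substituting into (ii) gives
\[
s^{*} = \frac{2 s^{*}(1)}{I^{*} w_{p}} = \frac{2 P w_{p}}{I^{*} w_{p}\, 2 L n_{kv} d_{h} w_{kv}} = \frac{P}{I^{*} L n_{kv} d_{h} w_{kv}}.
\]
This expression is also meaningful for non-integer $b$, since the formula extends to real arguments.

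The only potential obstacle is interpretive rather than technical: "weight-side intensity" must be identified with the $s \to 0$ limit $2b/w_{p}$ from Proposition~\ref{prop:ai_limits}(i), and not with the full $\mathrm{AI}(s,b)$. I will state this identification explicitly and then cross-check the result numerically. For Llama-3-70B on H100, the formula gives $P/(295 \cdot 80 \cdot 8 \cdot 128 \cdot 2) \approx 1.4$k tokens, matching the value in Section~3.
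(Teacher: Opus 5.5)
Your proposal is correct and follows essentially the same route as the paper: you rewrite $\eta$ in terms of $s^{*}(b)$, get monotonicity in $s$ from the derivative and in $b$ from $s^{*}(b)$ decreasing, and substitute $b = I^{*} w_{p}/2$ from Proposition~\ref{prop:ai_limits}(i) into Eq.~\eqref{eq:crossover_traffic}. Your explicit restriction to $s > 0$ for strict monotonicity in $b$ is a small but genuine sharpening: the paper's statement omits it, and at $s = 0$ the share satisfies $\eta \equiv 0$ for every $b$.
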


\begin{proof}
Dividing the numerator and denominator of $\eta$ by $W = P w_{p}$ and using $2 b L n_{kv} d_{h} w_{kv} / W = 1/s^{*}(b)$ from Eq.~\eqref{eq:crossover_traffic} gives $\eta = (s/s^{*})/(1 + s/s^{*}) = s/(s + s^{*})$. Monotonicity in $s$ follows from $\partial \eta / \partial s = s^{*}/(s + s^{*})^{2} > 0$, and monotonicity in $b$ follows since $s^{*}(b)$ is strictly decreasing. Claim (ii) is the functional form of Eq.~\eqref{eq:crossover_traffic}. For (iii), the weight-regime limit of Proposition~\ref{prop:ai_limits}(i) equals $I^{*}$ at $b = I^{*} w_{p}/2$; substituting into Eq.~\eqref{eq:crossover_traffic} and canceling $w_{p}$ gives the stated value, and the constants of Tables~\ref{tab:model_kv} and~\ref{tab:hardware} give the numerical instantiation: $1.4$k tokens for Llama-3-70B on H100, so the batch size at which weights could become compute bound already places every longer context in the KV-bound regime.
\end{proof}

\begin{proposition}[Speedup structure of byte reduction]
\label{prop:speedup_ceiling}
In the memory-bound regime $T_{\mathrm{step}} = B(s,b)/\beta$, let $c > 1$ be the factor by which a technique reduces KV bytes and let $S(c; s, b)$ be the step-time speedup of Eq.~\eqref{eq:speedup}. Then
\begin{enumerate}[leftmargin=1.5em]
  \item (closed form in the crossover) $\displaystyle S(c; s, b) = \frac{c\,(s + s^{*}(b))}{s + c\, s^{*}(b)}$;
  \item (ceiling) $1 < S < c$, $\lim_{s \to 0} S = 1$, and $\lim_{s \to \infty} S = c$: compression speedup converges to the byte factor exactly, and only in the KV-bound regime;
  \item (monotonicity) $S$ is strictly increasing in $s$ and in $b$;
  \item (sub-multiplicative composition) for byte factors $c_{1}, c_{2} > 1$,
  \[
  S(c_{1} c_{2};\, s, b) \;\le\; S(c_{1};\, s, b)\; S(c_{2};\, s, b),
  \]
  with equality only at the regime boundaries $s \to 0$ and $s \to \infty$.
\end{enumerate}
\end{proposition}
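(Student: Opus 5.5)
The plan is to reduce everything to a single dimensionless variable, the normalized context $x = s/s^{*}(b)$, using Proposition~\ref{prop:crossover}. Write $W = P w_{p}$ and $k = 2 L n_{kv} d_{h} w_{kv}$. Then Eq.~\eqref{eq:crossover_traffic} reads $b k / W = 1/s^{*}(b)$. Dividing numerator and denominator of Eq.~\eqref{eq:speedup} by $W$ gives
\[
S(c; s, b) = \frac{1 + x}{1 + x/c} = \frac{c\,(1+x)}{c + x}.
\]
Substituting $x = s/s^{*}(b)$ and multiplying through by $s^{*}(b)$ yields claim (i) directly. All remaining claims then become statements about the one-variable function $f_{c}(x) = c(1+x)/(c+x)$ on $x \in (0, \infty)$.

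For (ii), I will use the identities
\[
f_{c}(x) - 1 = \frac{(c-1)\,x}{c+x}, \qquad c - f_{c}(x) = \frac{c\,(c-1)}{c+x}.
\]
Both are strictly positive for $c > 1$ and $x > 0$, which gives $1 < S < c$. The two limits are read off at $x \to 0$ and $x \to \infty$.

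For (iii), I will differentiate:
\[
f_{c}'(x) = \frac{c\,(c-1)}{(c+x)^{2}} > 0,
\]
so $f_{c}$ is strictly increasing in $x$. Hence $S$ is strictly increasing in $s$, since $x$ is linear in $s$ with positive slope. It is also strictly increasing in $b$, because $x = s\,b/s^{*}(1)$ by Proposition~\ref{prop:crossover}(ii), so $x$ is strictly increasing in $b$ at fixed $s > 0$.

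The only step needing care is the composition inequality (iv). I would clear the common positive factor $c_{1} c_{2} (1+x)$, which reduces the claim to
\[
(x + c_{1})(x + c_{2}) \le (1 + x)(x + c_{1} c_{2}).
\]
Expanding both sides, the $x^{2}$ and $c_{1} c_{2}$ terms cancel. The inequality becomes
\[
x\,(c_{1} + c_{2} - 1 - c_{1} c_{2}) = -x\,(c_{1}-1)(c_{2}-1) \le 0,
\]
which holds for every $x \ge 0$. It is strict for every finite $x > 0$ when $c_{1}, c_{2} > 1$.

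The equality statement has to be interpreted carefully. At $x = 0$ equality is exact, since both sides equal $1$. As $x \to \infty$, equality holds only in the limit: the ratio $S(c_{1}c_{2})/\bigl(S(c_{1})S(c_{2})\bigr) = (x+c_{1})(x+c_{2})/\bigl((1+x)(x+c_{1}c_{2})\bigr)$ tends to $1$, and both sides tend to $c_{1}c_{2}$. I would state it in that form, so that ``equality only at the regime boundaries'' means strict inequality on the open regime and equality attained at $s \to 0$ and approached as $s \to \infty$.
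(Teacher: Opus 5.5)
Your proposal is correct and follows the paper's proof essentially step for step. Both proofs normalize the speedup by the weight term through $s^{*}(b)$ to get the closed form, use the positive-difference identities for $S-1$ and $c-S$, differentiate for monotonicity (with $b$ entering only through $s^{*}(b)$), and expand the product inequality to a residual $x(c_{1}-1)(c_{2}-1)$, which is the paper's $s\,s^{*}(c_{1}-1)(c_{2}-1)$ divided by $(s^{*})^{2}$. Working in the single variable $x = s/s^{*}(b)$ is only a notational simplification, and your observation that equality in (iv) is attained at $s=0$ but only approached as $s \to \infty$ is a slightly more precise reading than the paper's ``equality iff $s s^{*} = 0$.''
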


\begin{proof}
Let $W = P w_{p}$ and $K = 2 b L n_{kv} d_{h} w_{kv} s$, so that $K/W = s/s^{*}(b)$ by Eq.~\eqref{eq:crossover_traffic}. Then $S = (W + K)/(W + K/c)$, and dividing numerator and denominator by $W$ and clearing $c\, s^{*}$ gives (i). For (ii), $S - 1 = K(c-1)/(cW + K) > 0$ and $c - S = cW(c-1)/(cW + K) > 0$; the limits follow from $K/W \to 0$ and $K/W \to \infty$. For (iii), differentiating the closed form in $s$ gives $\partial S/\partial s = c\, s^{*} (c - 1)/(s + c s^{*})^{2} > 0$, and $S$ inherits monotonicity in $b$ from $s^{*}(b)$, since $\partial S/\partial s^{*} = c s (1 - c)/(s + c s^{*})^{2} < 0$ while $s^{*}(b)$ is strictly decreasing. For (iv), applying the closed form to each factor and to the product,
\[
\frac{S(c_{1} c_{2})}{S(c_{1})\, S(c_{2})} = \frac{(s + c_{1} s^{*})(s + c_{2} s^{*})}{(s + s^{*})(s + c_{1} c_{2} s^{*})} \le 1,
\]
because the denominator minus the numerator expands to $s s^{*} (c_{1} - 1)(c_{2} - 1) \ge 0$, with equality iff $s s^{*} = 0$ or one factor equals one.
\end{proof}

\begin{remark}[Consistency with the benchmark]
\label{rem:speedup_consistency}
Proposition~\ref{prop:speedup_ceiling}(i) reproduces every derived speedup in the paper: at W1 ($s = 128$k, $b = 1$, $s^{*} = 427.2$k) it gives $1.25\times$ for $c = 8$, and at W2 ($s = 128$k, $b = 32$, $s^{*} = 13.4$k) it gives $4.82\times$ for $c = 8$ and $6.6\times$ for the composed $c = 16$ configuration. Claim (iv) is quantitative in Table~\ref{tab:worked_decode}: at $s = 512$k the factors $c = 8$ and $c = 2$ predict $1.91 \times 1.37 = 2.63\times$ if multiplied naively, while the joint evaluation gives the correct $2.05\times$. Stacked methods must be evaluated at their joint byte factor, never by multiplying per-method speedups.
\end{remark}

\begin{proposition}[Effective-bandwidth bounds]
\label{prop:beta_eff}
In the placement model of Eq.~\eqref{eq:beta_eff} with $D$ partitions, local bandwidth $\beta_{p} = \beta_{\mathrm{agg}}/D$, and fabric bandwidth $\beta_{f}$, the effective bandwidth at locality fraction $\gamma \in (0,1)$ is
\[
\beta_{\mathrm{eff}}(\gamma) = \min\!\left(\frac{\beta_{p}}{\gamma},\; \frac{\beta_{f}}{1-\gamma}\right).
\]
The placement extremes give $\beta_{\mathrm{eff}} = \beta_{p}$ for pinned placement ($\gamma = 1$) and $\beta_{\mathrm{eff}} = \min\!\left(\beta_{\mathrm{agg}},\, \tfrac{D}{D-1}\beta_{f}\right)$ for striped placement ($\gamma = 1/D$). In particular, striping recovers the full aggregate bandwidth whenever $\beta_{f} \ge (1 - 1/D)\,\beta_{\mathrm{agg}}$, and pinning never exceeds the per-partition rate.
\end{proposition}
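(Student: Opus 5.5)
The plan is to invert the read-time expression of Eq.~\eqref{eq:beta_eff} directly. For $\gamma \in (0,1)$ both terms inside the maximum are positive and linear in $x$. Dividing $x$ by a maximum of positive quantities equals the minimum of the reciprocals, so
\[
\beta_{\mathrm{eff}}(\gamma) = \frac{x}{\max\!\left(\gamma x/\beta_{p},\,(1-\gamma)x/\beta_{f}\right)} = \min\!\left(\frac{\beta_{p}}{\gamma},\, \frac{\beta_{f}}{1-\gamma}\right).
\]
The factor $x$ cancels, which shows that the effective bandwidth is independent of the volume read. This gives the main formula.

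For the extremes, I will go back to Eq.~\eqref{eq:beta_eff} itself and not use the min form, since $\gamma = 1$ makes $\beta_{f}/(1-\gamma)$ undefined. At $\gamma = 1$ the fabric term vanishes, so $T_{\mathrm{read}} = x/\beta_{p}$ and $\beta_{\mathrm{eff}} = \beta_{p}$. Equivalently, one can read $\beta_{f}/0$ as $+\infty$, so the min reduces to $\beta_{p}$.

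For striping I substitute $\gamma = 1/D$. The first term becomes $D\beta_{p} = \beta_{\mathrm{agg}}$ and the second becomes $\beta_{f}/(1-1/D) = \tfrac{D}{D-1}\beta_{f}$. Then $\beta_{\mathrm{eff}} = \beta_{\mathrm{agg}}$ exactly when the first term is the smaller one, that is, when $\tfrac{D}{D-1}\beta_{f} \ge \beta_{\mathrm{agg}}$, which rearranges to $\beta_{f} \ge (1-1/D)\beta_{\mathrm{agg}}$. As a sanity check on B200, $D=2$ and $10 \ge 4$ TB/s, which matches the striped figure of Section~3. This step needs $D \ge 2$; for $D = 1$ the package is monolithic and only the pinned case applies.

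The final clause, that pinning never exceeds the per-partition rate, needs care, because the min form at $\gamma<1$ can exceed $\beta_{p}$ (striping does). The claim is only about pinned placement, so I will state it as the equality $\beta_{\mathrm{eff}}(1) = \beta_{p}$. I will add, as a structural observation, that $\beta_{\mathrm{eff}}(\gamma) \le \beta_{p}/\gamma$ for all $\gamma$, which is at most $\beta_{p}$ only at $\gamma = 1$.

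The main obstacle I expect is interpretive rather than computational. It lies in handling the endpoint $\gamma = 1$ outside the open interval where the min formula is stated, and in making sure the striped case treats the non-local fraction $(1-1/D)$ as traversing a single shared fabric of bandwidth $\beta_{f}$, as Eq.~\eqref{eq:beta_eff} assumes. I will state that convention explicitly and then rely only on the model's definitions.
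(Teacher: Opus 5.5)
Your proposal is correct and follows essentially the same route as the paper's proof. Both invert $T_{\mathrm{read}}$ from Eq.~\eqref{eq:beta_eff} so that the maximum becomes a minimum and $x$ cancels, both read the pinned case directly off the read-time expression with no fabric term, and both substitute $\gamma = 1/D$ to obtain $\min(\beta_{\mathrm{agg}}, \tfrac{D}{D-1}\beta_{f})$ and the striping condition. Your added care on two points is sound and left implicit by the paper: the pinned endpoint $\gamma = 1$ lies outside the open interval where the min formula is stated, and the striped formula requires $D \ge 2$.
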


\begin{proof}
The expression is $x/T_{\mathrm{read}}$ with $T_{\mathrm{read}}$ from Eq.~\eqref{eq:beta_eff}. Pinned placement has $\gamma = 1$ and all bytes local, so $T_{\mathrm{read}} = x/\beta_{p}$. Striped placement has $\gamma = 1/D$, giving $\beta_{p}/\gamma = D \beta_{p} = \beta_{\mathrm{agg}}$ and $\beta_{f}/(1 - \gamma) = \tfrac{D}{D-1}\beta_{f}$, and the minimum of the two is the stated bound. The striping condition follows by requiring the second argument to be at least $\beta_{\mathrm{agg}}$.
\end{proof}

\begin{remark}[Scope of the placement model]
\label{rem:beta_scope}
Equation~\eqref{eq:beta_eff} charges remote reads to the fabric alone and does not model contention for the remote partition's HBM. The model is therefore exact for pinned placement and an upper bound that favors remote reads for striped placement. The B200 striped evaluation of Section~3 saturates at the aggregate rate, where the omission is immaterial, and the MI300X striped figures carry the stated assumption that the on-package fabric does not bind. The first-order serving guidance, stripe pages and never serve per-step reads from a remote tier, is unaffected.
\end{remark}

\begin{corollary}[Decode-rate ceiling]
\label{cor:tokrate}
Let $R_{\infty} = \beta_{\mathrm{eff}}/(2 L n_{kv} d_{h} w_{kv})$ as in Eq.~\eqref{eq:tokrate}. Then the decode throughput satisfies $s \cdot b / T_{\mathrm{step}} \to R_{\infty}$ as $s \to \infty$: the product of context length and token rate is a hardware constant. Equivalently, the long-context token rate at context $s$ is $R_{\infty}/s$ before weight traffic is accounted. For Llama-3-70B in BF16, $R_{\infty}$ is $10.2$ M tokens per second on H100, $24.4$ M on B200 striped, $12.2$ M on B200 pinned, $16.2$ M on MI300X striped, and $2.0$ M on MI300X pinned.
\end{corollary}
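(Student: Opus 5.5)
The plan is to work directly from the cost model of Eq.~\eqref{eq:decode_bytes}, with the bandwidth replaced by the placement-dependent $\beta_{\mathrm{eff}}$ of Proposition~\ref{prop:beta_eff}. Write $W = P w_{p}$ and $k = 2 L n_{kv} d_{h} w_{kv}$, so that in the memory-bound regime
\[
T_{\mathrm{step}}(s,b) = \frac{W + b k s}{\beta_{\mathrm{eff}}}.
\]
Each step advances $b$ sequences by one token at context $s$. The quantity $s \cdot b / T_{\mathrm{step}}$ is therefore
\[
\frac{s\, b\, \beta_{\mathrm{eff}}}{W + b k s} = \frac{\beta_{\mathrm{eff}}}{k} \cdot \frac{1}{1 + W/(b k s)}.
\]
By Eq.~\eqref{eq:crossover_traffic}, $W/(bks) = s^{*}(b)/s$, so the expression equals $R_{\infty} \cdot s/(s + s^{*}(b)) = R_{\infty}\, \eta(s,b)$, using Proposition~\ref{prop:crossover}(i).

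The limit then follows immediately from $\eta(s,b) \to 1$ as $s \to \infty$. The same identity also shows the approach is monotone from below, since $\eta$ is strictly increasing in $s$. As a result, $R_{\infty}$ is a supremum and not only a limit, which is worth stating.

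The ``equivalent'' form is obtained by dropping the weight term, i.e.\ setting $W = 0$. Then $T_{\mathrm{step}} = bks/\beta_{\mathrm{eff}}$, and the per-sequence token rate $b/T_{\mathrm{step}}$, normalised per sequence, is exactly $\beta_{\mathrm{eff}}/(ks) = R_{\infty}/s$. I will make explicit that ``token rate'' here means context-weighted aggregate throughput divided by $s$, so the statement is consistent across $b$.

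The numerical instantiation is the only step needing care, and I expect it to be the main obstacle, mostly in bookkeeping units and placement assumptions. For Llama-3-70B in BF16, $k = 327{,}680$ bytes per token, as computed after Eq.~\eqref{eq:kv_bytes_intro}. Dividing the $\beta_{\mathrm{eff}}$ values by $k$ gives the following:
\begin{itemize}
\item H100 at $3.35$ TB/s gives $10.2$M tokens/s, since $3.35\times10^{12}/327{,}680 \approx 1.02\times 10^{7}$.
\item B200 striped at $8.0$ TB/s gives $24.4$M tokens/s.
\item B200 pinned at $4.0$ TB/s gives $12.2$M tokens/s.
\item MI300X striped at $5.3$ TB/s gives $16.2$M tokens/s.
\item MI300X pinned at $0.6625$ TB/s gives $2.0$M tokens/s.
\end{itemize}
The $\beta_{\mathrm{eff}}$ values come from Table~\ref{tab:hardware} and Proposition~\ref{prop:beta_eff}. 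For B200 striped, the striping condition $10 \ge \tfrac12 \cdot 8$ holds, so the aggregate rate is recovered. For MI300X striped, I rely on the assumption stated in Remark~\ref{rem:beta_scope} that the undisclosed fabric does not bind. I will flag that the MI300X striped figure inherits this assumption, and that all figures use decimal TB/s.
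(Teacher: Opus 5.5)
Your proposal is correct and follows the paper's proof. Both write $s b / T_{\mathrm{step}} = s b \beta_{\mathrm{eff}}/(P w_{p} + 2 b L n_{kv} d_{h} w_{kv} s)$, let the weight term vanish as $s \to \infty$, and get the numbers by dividing the Table~\ref{tab:hardware} bandwidths by $327{,}680$ bytes per token; your rewriting as $R_{\infty}\,\eta(s,b)$ is a useful extra, since it shows the limit is approached monotonically from below and so $R_{\infty}$ is a supremum. One wording fix: $b/T_{\mathrm{step}}$ is the aggregate token rate, which is what the statement means, not the per-sequence rate, which would be $R_{\infty}/(bs)$ once the weight term is dropped.
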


\begin{proof}
By Eqs.~\eqref{eq:decode_bytes} and~\eqref{eq:tokrate}, $s\, b / T_{\mathrm{step}} = s b\, \beta_{\mathrm{eff}} / (P w_{p} + 2 b L n_{kv} d_{h} w_{kv} s)$, and dividing numerator and denominator by $s$ sends the weight term to zero. The numerical values are the device bandwidths of Table~\ref{tab:hardware} divided by the $327{,}680$ bytes per token of Table~\ref{tab:model_kv}, worked in Appendix~\ref{app:constants}.
\end{proof} 
\section{Comparative SoK Analysis Across the Five Compression Domains}
\label{sec:benchmark}

This section evaluates one representative method per taxonomy domain on common workloads. The protocol fixes the model (Llama-3-70B, Table~\ref{tab:model_kv}), the hardware (B200 with striped pages, Table~\ref{tab:hardware}), and the context length (128k tokens) for all derived columns, and it imports quality numbers only as within-paper reported values with their benchmark named. Two evidence classes appear throughout and are never mixed within a cell. Derived values come from Eqs.~\eqref{eq:decode_bytes} and~\eqref{eq:speedup}. Reported values come from the cited papers.

\subsection{Workload and Metric Definitions}

Table~\ref{tab:workloads} defines four workloads spanning the operating points that production reasoning serving actually exercises.

\begin{table}[t]
\centering
\caption{Benchmark workloads. W1 and W2 isolate the batch-size lever. W3 isolates prefix sharing. W4 isolates reuse across sessions.}
\label{tab:workloads}
\footnotesize
\begin{tabular}{l>{\raggedright\arraybackslash}p{3.6cm}ccl}
\toprule
ID & Pattern & $b$ & $s$ & Binding constraint \\
\midrule
W1 & Single-sequence decode & 1 & 128k & Latency, weight-dominated \\
W2 & Batched decode & 32 & 128k & Throughput, KV-dominated \\
W3 & 8 requests, 96k shared prefix plus 32k unique & 8 & 128k & Capacity and TTFT \\
W4 & Reuse after 10-minute gap & 1 & 128k & Recompute versus fetch \\
\bottomrule
\end{tabular}
\end{table}

Four metrics accompany each method. The byte factor $c$ divides Eq.~\eqref{eq:decode_bytes} traffic. The footprint is KV gigabytes at 128k context per sequence. Decode speedup is Eq.~\eqref{eq:speedup} evaluated at W1 and W2. Quality delta is the accuracy or perplexity change reported by the method's authors on their stated benchmark. The speedup columns are derived bounds. They assume perfect kernel efficiency and full bandwidth realization and ignore dequantization overhead, which Section~5 discusses; absolute times move with realized bandwidth, ratios do not. W2 at $b = 32$ and the eight-request aggregate of W3 ($255$ GB with weights) exceed single-device capacity and are evaluated as traffic-ratio regimes; the ratios are capacity-independent.

\subsection{The SoK Comparison Matrix}

Table~\ref{tab:sok_matrix} presents the matrix. Three readings matter. First, the same 2-bit quantization yields $1.25\times$ at W1 and $4.82\times$ at W2. The algorithm did not change. The regime did, exactly as Eq.~\eqref{eq:speedup} predicts. Second, the lossless domains move feasibility rather than decode speed. Paging converts stranded memory into usable batch slots, and prefix sharing divides stored bytes by $\sigma$ without touching per-step traffic. Third, tiering barely changes decode at all. It converts a 25-second recompute into a sub-second fetch on reuse, an economic gain that no decode-bound metric captures.

\begin{table}[t]
\centering
\caption{SoK comparison matrix. Llama-3-70B, B200 with striped pages, 128k context. Byte factors and footprints are exact from Eq.~\eqref{eq:kv_bytes_unified}. Decode speedups are derived from Eq.~\eqref{eq:speedup}. Quality deltas are within-paper reported values on the benchmarks named and are not cross-comparable. $^{\dagger}$Total stored across the 8 requests of W3, versus 335.5 GB without sharing.}
\label{tab:sok_matrix}
\footnotesize
\begin{tabularx}{\linewidth}{>{\raggedright\arraybackslash}p{2.2cm}crrcc>{\raggedright\arraybackslash}X}
\toprule
Method & $c$ & KV (GB) & W1 & W2 & Lossless & Quality delta (as reported) \\
\midrule
BF16 paged baseline \cite{kwon2023vllm} & 1$\times$ & 41.9 & 1.00$\times$ & 1.00$\times$ & --- & reference \\
KIVI, 2-bit \cite{liu2024kivi} & 8$\times$ & 5.2 & 1.25$\times$ & 4.82$\times$ & No & $+0.03$ to $0.3$ ppl. LongBench $\approx$ baseline \\
KVQuant, 3-bit \cite{hooper2024kvquant} & 5.3$\times$ & 7.9 & 1.23$\times$ & 3.78$\times$ & No & $<0.1$ ppl. LongBench near baseline \\
H2O, $r{=}0.5$ \cite{zhang2023h2o} & 2$\times$ & 21.0 & 1.13$\times$ & 1.83$\times$ & No & 97 to 99\% task retention reported \\
SnapKV, $r{=}0.25$ \cite{li2024snapkv} & 4$\times$ & 10.5 & 1.21$\times$ & 3.12$\times$ & No & 96 to 98\% LongBench retention reported \\
PagedAttention \cite{kwon2023vllm} & 1$\times$ & 41.9 & 1.00$\times$ & 1.00$\times$ & Yes & None. Waste cut from 60 to 80\% to $<4$\% \\
RadixAttention, W3 \cite{zheng2024sglang} & $\sigma{=}2.91$ & 115.3$^{\dagger}$ & 1.00$\times$ & 1.00$\times$ & Yes & None. Derived TTFT $2.8\times$ lower on warm W3 hits (App.~A) \\
CacheGen tiering \cite{liu2024cachegen} & codec 4$\times$ & 41.9 & 1.00$\times$ & 1.00$\times$ & Near & Negligible loss reported. W4 fetch 0.16 s versus 25.0 s recompute \\
\bottomrule
\end{tabularx}
\end{table}

\subsection{Precision Reduction Results}

Figure~\ref{fig:quantization} shows the footprint ladder and the reported quality deltas. Three findings hold across studies. Eight-bit storage is free. Four-bit storage is near-free once keys are quantized per channel and outliers are isolated \cite{hooper2024kvquant}. Two-bit storage holds only with structure-aware layouts and a full-precision residual window \cite{liu2024kivi}, while naive round-to-nearest at 2-bit collapses. The roofline gives the system-side meaning of the ladder. Each halving of $w_{kv}$ doubles the long-context intensity floor of Eq.~\eqref{eq:ai_floor}, from $8$ FLOP/B at BF16 to $64$ FLOP/B at 2-bit, and delivers the full compression factor as decode speedup once $s > s^{*}(b)$.

% Figure 4. KV quantization. Footprint ladder at 128k with reported quality deltas.
\begin{figure}[t]
\centering
\begin{tikzpicture}
\begin{axis}[
  paper bar plot,
  width=0.72\linewidth,
  height=0.48\linewidth,
  ylabel={KV footprint at 128k context (GB)},
  xlabel={Bits per stored element $w_{q}$},
  symbolic x coords={16, 8, 4, 3, 2},
  xtick=data,
  tick align=inside,
  ymin=0, ymax=50,
  ytick={0, 10, 20, 30, 40}
]
\addplot coordinates {(16,41.9) (8,21.0) (4,10.5) (3,7.9) (2,5.2)};
% bar-top labels: footprint, reported quality delta, derived W2 speedup (Table sok_matrix)
\node[font=\scriptsize, text=paperaxis, anchor=south, align=center] at (axis cs:16, 42.5) {41.9 GB\\[0.5pt] {\tiny\textcolor{papermuted}{reference}}};
\node[font=\scriptsize, text=paperaxis, anchor=south, align=center] at (axis cs:8, 21.6) {21.0 GB\\[0.5pt] {\textcolor{papertertiary!80!black}{$\Delta$ppl $\approx 0$}}\\[0.5pt] {\tiny\textcolor{papermuted}{W2 $1.83\times$}}};
\node[font=\scriptsize, text=paperaxis, anchor=south, align=center] at (axis cs:4, 11.1) {10.5 GB\\[0.5pt] {\textcolor{papertertiary!80!black}{$\Delta$ppl $+0.05$}}\\[0.5pt] {\tiny\textcolor{papermuted}{W2 $3.12\times$}}};
\node[font=\scriptsize, text=paperaxis, anchor=south, align=center] at (axis cs:3, 8.5) {7.9 GB\\[0.5pt] {\textcolor{papertertiary!80!black}{$+0.03$--$0.1$ ppl}}\\[0.5pt] {\tiny\textcolor{papermuted}{W2 $3.78\times$}}};
\node[font=\scriptsize, text=paperaxis, anchor=south, align=center] at (axis cs:2, 5.8) {5.2 GB\\[0.5pt] {\textcolor{papersecondary}{$+0.03$--$0.3$}}\\[0.5pt] {\textcolor{papersecondary}{naive $>1$ ppl}}\\[0.5pt] {\tiny\textcolor{papermuted}{W2 $4.82\times$}}};
\end{axis}
\end{tikzpicture}
\caption{KV footprint at 128k context per quantization level, Llama-3-70B, with quality deltas as reported by KIVI \cite{liu2024kivi} and KVQuant \cite{hooper2024kvquant}. Eight-bit storage is free. Four-bit is near-free with per-channel keys and outlier isolation. Two-bit holds only for structure-aware layouts. Naive round-to-nearest at 2-bit loses more than one perplexity point. Each halving of $w_{q}$ doubles the intensity floor of Eq.~\eqref{eq:ai_floor}. Bar labels give the footprint, the reported quality delta, and the derived W2 decode speedup of Eq.~\eqref{eq:speedup} at byte factor $c = 16/w_{q}$, matching Table~\ref{tab:sok_matrix}.}
\label{fig:quantization}
\end{figure}

\subsection{Token Eviction Results}

Figure~\ref{fig:eviction} shows the retention-quality structure reported by eviction studies. Aggregate benchmarks such as LongBench tolerate aggressive eviction, with reported retention of 96 to 99 percent at $r = 0.25$ for heavy-hitter and observation-window policies \cite{zhang2023h2o, li2024snapkv}. Retrieval-style probes degrade discontinuously, because the evicted middle of the context is exactly where the answer lives. A single scalar quality budget is therefore insufficient. Eviction policies must be evaluated against the position sensitivity of the workload. Eviction composes with quantization, and Table~\ref{tab:worked_decode} showed that the combined byte factor multiplies.

% Figure 5. Token eviction. Retention ratio versus reported relative quality.
\begin{figure}[t]
\centering
\begin{tikzpicture}
\begin{axis}[
  width=0.72\linewidth,
  height=0.48\linewidth,
  xlabel={Retention ratio $r$},
  ylabel={Relative quality (\% of baseline)},
  xmin=0, xmax=1.1,
  ymin=0, ymax=110,
  xtick={0,0.1,0.25,0.5,0.75,1},
  tick align=inside,
  grid=major,
  grid style={draw=papergrid},
  axis line style={draw=paperaxis, thick},
  tick label style={font=\small},
  tick style={draw=black, line width=0.5pt},
  label style={font=\small},
  legend style={font=\scriptsize, draw=papergrid, fill=paperwhite, rounded corners=1.5pt},
  legend pos=south east,
  clip=false
]
\addplot[color=paperprimary, line width=1.2pt, mark=*, mark size=1.8pt] coordinates {
  (1,100) (0.5,99.5) (0.25,98.0) (0.1,96.0)
};
\addplot[color=papersecondary, line width=1.2pt, mark=square*, mark size=1.8pt] coordinates {
  (1,100) (0.5,99.0) (0.25,96.5) (0.1,92.5)
};
\addplot[color=paperquaternary, line width=1.2pt, mark=triangle*, mark size=3.0pt, densely dashed] coordinates {
  (1,100) (0.5,55) (0.25,25) (0.1,5)
};
\legend{SnapKV on LongBench (reported), H2O on LongBench (reported), Naive window eviction on needle retrieval}
\node[font=\scriptsize, text=papermuted, align=left, anchor=west] (retnote) at (axis cs:0.04, 72) {retrieval probes collapse\\when the middle is evicted};
\draw[draw=papermuted!70, line width=0.4pt, -{Latex[length=1.3mm]}] (retnote.east) to[out=0, in=160] (axis cs:0.63, 66);
% top-row footprint labels aligned with the x gridlines; drawn as nodes because
% pgfplots ignores tick position inside extra tick styles
\node[font=\tiny, text=papermuted, anchor=south] at (axis cs:0.1, 110.5) {4.2 GB};
\node[font=\tiny, text=papermuted, anchor=south] at (axis cs:0.25, 110.5) {10.5 GB};
\node[font=\tiny, text=papermuted, anchor=south] at (axis cs:0.5, 110.5) {21.0 GB};
\node[font=\tiny, text=papermuted, anchor=south] at (axis cs:1, 110.5) {41.9 GB};
\end{axis}
\end{tikzpicture}
\caption{Reported retention-quality behavior of token eviction. Aggregate long-context benchmarks tolerate $r = 0.25$ with 96 to 99 percent retention for heavy-hitter and observation-window policies \cite{zhang2023h2o, li2024snapkv}. Position-sensitive retrieval degrades discontinuously under window eviction \cite{xiao2023streamingllm}. Curves are representative within-paper values and are task dependent. Top-row labels give the KV footprint at 128k context for Llama-3-70B, $41.9\,r$ GB from Eq.~\eqref{eq:kv_bytes_unified}.}
\label{fig:eviction}
\end{figure}

\subsection{Paging, Prefix Sharing, and Tiering Results}

Figure~\ref{fig:paging} shows the capacity effect of paging. Reservation-based allocation strands 60 to 80 percent of KV memory in over-reservation and fragmentation, while block-level allocation holds waste under 4 percent \cite{kwon2023vllm}. The gain appears as a larger feasible batch size rather than lower per-step bytes, which is why the W1 and W2 columns of Table~\ref{tab:sok_matrix} show $1.00\times$ for paging while production throughput roughly doubles to quadruples \cite{kwon2023vllm}.

% Figure 6. Paging versus reservation. Share of allocated KV memory by category.
\begin{figure}[t]
\centering
\begin{tikzpicture}
\begin{axis}[
  paper bar plot,
  width=0.66\linewidth,
  height=0.46\linewidth,
  ylabel={Share of allocated KV memory (\%)},
  xmin=0.4, xmax=2.6,
  xtick={1,2},
  xticklabels={Reservation, Paged $\tau{=}16$},
  ymin=0, ymax=112,
  tick align=inside,
  tick style={draw=black, line width=0.5pt},
  legend style={font=\small, at={(0.5,-0.1)}, anchor=north, legend columns=3}
]
% stacked columns drawn explicitly; segment placement must not depend on bar-style defaults
\fill[paperprimary!25, draw=paperprimary!85!black, line width=0.8pt] (axis cs:0.84,0) rectangle (axis cs:1.16,25);
\fill[papersecondary!28, draw=papersecondary!85!black, line width=0.8pt] (axis cs:0.84,25) rectangle (axis cs:1.16,70);
\fill[paperquaternary!32, draw=paperquaternary!85!black, line width=0.8pt] (axis cs:0.84,70) rectangle (axis cs:1.16,100);
\fill[paperprimary!25, draw=paperprimary!85!black, line width=0.8pt] (axis cs:1.84,0) rectangle (axis cs:2.16,96.5);
\fill[paperquaternary!32, draw=paperquaternary!85!black, line width=0.8pt] (axis cs:1.84,96.5) rectangle (axis cs:2.16,100);
\addlegendimage{area legend, fill=paperprimary!25, draw=paperprimary!85!black, line width=0.8pt}
\addlegendentry{ Useful KV data }
\addlegendimage{area legend, fill=papersecondary!28, draw=papersecondary!85!black, line width=0.8pt}
\addlegendentry{ Over-reservation }
\addlegendimage{area legend, fill=paperquaternary!32, draw=paperquaternary!85!black, line width=0.8pt}
\addlegendentry{ Fragmentation}
% in-segment share labels
\node[font=\scriptsize, text=paperaxis] at (axis cs:1, 12.5) {25\%};
\node[font=\scriptsize, text=paperaxis] at (axis cs:1, 47.5) {45\%};
\node[font=\scriptsize, text=paperaxis] at (axis cs:1, 85) {30\%};
\node[font=\scriptsize, text=paperaxis] at (axis cs:2, 48) {96.5\%};
\node[font=\scriptsize, text=paperquaternary!70!black, anchor=west] at (axis cs:2.2, 98.2) {3.5\%};
\end{axis}
\end{tikzpicture}
\caption{Allocation efficiency of KV memory. Reservation-based serving strands 60 to 80 percent of cache memory in over-reservation and fragmentation \cite{kwon2023vllm}. The category split of 25, 45, and 30 percent is illustrative within that reported range. PagedAttention bounds waste to the partially filled final block, measured below 4 percent \cite{kwon2023vllm}.}
\label{fig:paging}
\end{figure}

Figure~\ref{fig:prefix} shows the TTFT effect of prefix sharing. Time to first token falls with hit rate, linearly in the weight-matmul term and as $1 - h^{2}$ in the attention term (Appendix~\ref{app:derivations}): from $25.0$ s cold to a derived $3.7$ s at the 90 percent hit rates reported on sharing-heavy traces \cite{zheng2024sglang}. On W3 the shared 96k prefix is encoded once for 8 requests. A request that hits the warm prefix encodes its 32k suffix against the resident cache, a derived $8.8$ s versus $25.0$ s cold, or $2.8\times$ lower TTFT, and stored bytes fall $2.91\times$, with zero quality cost.

% Figure 7. Prefix cache hit rate versus time to first token, 128k context, B200.
\begin{figure}[t]
\centering
\begin{tikzpicture}
\begin{axis}[
  width=0.72\linewidth,
  height=0.48\linewidth,
  xlabel={Prefix cache hit rate},
  ylabel={Time to first token (s)},
  xmin=-0.03, xmax=1,
  ymin=0, ymax=28,
  tick align=inside,
  tick style={draw=black, line width=0.5pt},
  grid=major,
  grid style={draw=papergrid!50, line width=0.3pt},
  axis line style={draw=paperaxis, thick},
  tick label style={font=\small},
  label style={font=\small},
  clip=false
]
% reported hit-rate band, drawn first
\fill[papertertiary, opacity=0.13] (axis cs:0.74,0) rectangle (axis cs:0.99,28);
\node[font=\scriptsize, text=papertertiary!80!black, align=center] at (axis cs:0.865, 24.6) {reported hit rates\\on sharing-heavy\\traces \cite{zheng2024sglang}};
% TTFT curve: weight-matmul share 0.4548 falls as (1-h); attention share 0.5452 falls as (1-h^2)
\addplot[domain=0:1, samples=50, color=paperprimary, line width=1.3pt] {25.0*(0.4548*(1-x) + 0.5452*(1-x^2))};
\node[font=\scriptsize, text=paperprimary, anchor=south] at (axis cs:0.3, 12) {$\mathrm{TTFT}(h) = 25.0\,(0.45(1{-}h) + 0.55(1{-}h^{2}))$};
% endpoint markers
\addplot[only marks, mark=*, mark size=2pt, color=papersecondary, forget plot] coordinates {(0,25.0) (0.9,3.7)};
\node[font=\scriptsize, text=papersecondary, anchor=south west] at (axis cs:0.01, 25.8) {cold. $25.0$ s};
\node[font=\scriptsize, text=papersecondary, anchor=east] at (axis cs:0.88, 3.7) {$h{=}0.9$. $3.7$ s};
\end{axis}
\end{tikzpicture}
\caption{Time to first token versus prefix cache hit rate for a 128k-token request on B200, derived from the exact prefill count of Eq.~\eqref{eq:app_prefill} at 70 percent MFU (Appendix~\ref{app:derivations}). The weight-matmul term falls linearly in hit rate and the attention term as $1 - h^{2}$, so warm TTFT drops faster than linearly. The shaded band marks hit rates reported by RadixAttention on sharing-heavy workloads \cite{zheng2024sglang}. On workload W3 the shared 96k prefix is encoded once for 8 requests; a warm request encodes its 32k suffix against the resident prefix, a derived $8.8$ s versus $25.0$ s cold, or $2.8\times$ lower TTFT, and stored bytes fall $2.91\times$, at zero quality cost.}
\label{fig:prefix}
\end{figure}

Figure~\ref{fig:tiering} shows the reuse economics of tiering on W4. Recomputing a 128k-token KV cache costs a derived $25.0$ s of B200 prefill time. Fetching the same cache costs $0.66$ s over PCIe Gen5, $0.16$ s with a $4\times$ transport codec \cite{liu2024cachegen}, and $12$ ms over NVLink with the codec. The fetch path is $38\times$ to $2{,}100\times$ faster than recompute. Tiering loses whenever the decode loop itself must read remote memory, because PCIe bandwidth sits $125\times$ below B200 HBM. The domain boundary is therefore exact. Tier for reuse across steps, never for reads within a step.

% Figure 8. Time to KV availability for a 128k cache, Llama-3-70B on B200.
% Numeric y positions on purpose. Symbolic y coords fail when referenced from axis cs nodes.
% log origin=infty anchors bars at the axis minimum; without it log bars start at x=1.
\begin{figure}[t]
\centering
\begin{tikzpicture}
\begin{axis}[
  xbar,
  bar width=9pt,
  width=0.72\linewidth,
  height=0.5\linewidth,
  xmode=log,
  log origin=infty,
  xmin=0.01, xmax=60,
  ymin=0.3, ymax=5.7,
  xlabel={Time to KV availability (s)},
  ytick={1,2,3,4,5},
  yticklabels={Recompute prefill, PCIe Gen5 raw, PCIe Gen5 codec 4x, NVLink raw, NVLink codec 4x},
  yticklabel style={font=\scriptsize},
  tick label style={font=\small},
  label style={font=\small},
  tick align=inside,
  tick style={draw=black, line width=0.5pt},
  grid=major,
  grid style={draw=papergrid!50, line width=0.3pt},
  axis line style={draw=paperaxis, thick},
  clip=false
]
\addplot[draw=paperprimary!85!paperblack, fill=paperprimary!70, line width=0.8pt] coordinates {
  (25.0,1)
  (0.66,2)
  (0.164,3)
  (0.047,4)
  (0.012,5)
};
\node[font=\scriptsize, anchor=west, text=paperaxis] at (axis cs:28,1) {$25.0$ s};
\node[font=\scriptsize, anchor=west, text=paperaxis] at (axis cs:0.74,2) {$0.66$ s};
\node[font=\scriptsize, anchor=west, text=paperaxis] at (axis cs:0.18,3) {$0.16$ s};
\node[font=\scriptsize, anchor=west, text=paperaxis] at (axis cs:0.053,4) {$47$ ms};
\node[font=\scriptsize, anchor=west, text=paperaxis] at (axis cs:0.0135,5) {$12$ ms};
\node[font=\scriptsize, text=papermuted, anchor=east] at (axis cs:55, 5.35) {fetch paths are $38\times$--$2{,}100\times$ faster than recompute};
\end{axis}
\end{tikzpicture}
\caption{Time to make a 128k-token KV cache available to a decode worker. Recompute uses the exact prefill count of Eq.~\eqref{eq:app_prefill} at 70 percent MFU on B200 (Appendix~\ref{app:derivations}); transfer values move the $41.9$ GB cache of Eq.~\eqref{eq:kv_bytes_intro} over PCIe Gen5 x16 at 64 GB/s or NVLink at 900 GB/s per direction. The codec rows apply the CacheGen transport compression factor of roughly $4\times$ \cite{liu2024cachegen}. Fetching beats recomputing by $38\times$ to $2{,}100\times$. The advantage vanishes when decode itself must read remote memory, because PCIe sits $125\times$ below B200 HBM bandwidth \cite{lmcache2024}.}
\label{fig:tiering}
\end{figure}

\subsection{Cross-Domain Synthesis and the Speedup-Quality Frontier}

Figure~\ref{fig:speedup_quality} places the lossy methods on the speedup-quality plane at W2. The frontier runs through KVQuant INT8 and INT4, KIVI 2-bit, and the composed 2-bit plus eviction point at $6.6\times$. Domains differ in kind, not just in position. Quantization buys bandwidth at a controlled, smooth quality price. Eviction buys bandwidth at a price that depends on position sensitivity and can cliff. Paging, prefix sharing, and tiering sit off the plane entirely because they are lossless, and they should be treated as always-on infrastructure rather than as comparable alternatives.

% Figure 11. Speedup versus retained quality, W2 regime, 128k context, H100.
\begin{figure}[t]
\centering
\begin{tikzpicture}
\begin{axis}[
  width=0.72\linewidth,
  height=0.52\linewidth,
  xlabel={Retained long-context quality (\% of baseline)},
  ylabel={Decode speedup at W2},
  xmin=85, xmax=103,
  ymin=0, ymax=8,
  xtick={85, 90, 95, 100},
  grid=major,
  grid style={draw=papergrid, line width=0.3pt},
  axis line style={draw=paperaxis, thick},
  tick label style={font=\small},
  label style={font=\small},
  tick align=inside,
  tick style={draw=black, line width=0.5pt},
  clip=false
]
% frontier through non-dominated points
\addplot[color=papermuted, line width=0.9pt, densely dashed, forget plot] coordinates {
  (99.9,1.83) (99.7,3.12) (99.0,4.82) (96.5,6.62)
};
\node[font=\scriptsize\itshape, text=papermuted, anchor=east] at (axis cs:96.8, 5.6) {derived W2 frontier};

\addplot[only marks, mark=*, mark size=2.2pt, color=paperprimary, forget plot] coordinates {(99.9,1.83)};
\node[font=\scriptsize, text=paperprimary, anchor=south west] at (axis cs:100.0, 1.98) {INT8 KVQuant};

\addplot[only marks, mark=*, mark size=2.2pt, color=paperprimary, forget plot] coordinates {(99.7,3.12)};
\node[font=\scriptsize, text=paperprimary, anchor=south west] at (axis cs:99.9, 3.25) {INT4 KVQuant};

\addplot[only marks, mark=*, mark size=2.2pt, color=paperprimary, forget plot] coordinates {(99.0,4.82)};
\node[font=\scriptsize, text=paperprimary, anchor=south west] at (axis cs:99.1, 4.98) {2-bit KIVI};

\addplot[only marks, mark=*, mark size=2.2pt, color=papersecondary, forget plot] coordinates {(90.0,4.82)};
\node[font=\scriptsize, text=papersecondary, anchor=south west] at (axis cs:90.2, 4.98) {2-bit naive};

\addplot[only marks, mark=square*, mark size=2.0pt, color=papertertiary, forget plot] coordinates {(98.0,1.83)};
\node[font=\scriptsize, text=papertertiary, anchor=north] at (axis cs:98.0, 1.68) {H2O $r{=}0.5$};

\addplot[only marks, mark=square*, mark size=2.0pt, color=papertertiary, forget plot] coordinates {(97.0,3.12)};
\node[font=\scriptsize, text=papertertiary, anchor=north] at (axis cs:97.0, 2.98) {SnapKV $r{=}0.25$};

\addplot[only marks, mark=diamond*, mark size=2.6pt, color=paperaccent, forget plot] coordinates {(96.5,6.62)};
\node[font=\scriptsize, text=paperaccent, anchor=south west] at (axis cs:96.7, 6.8) {2-bit plus $r{=}0.5$};
\end{axis}
\end{tikzpicture}
\caption{Speedup versus retained quality at W2 (batched decode, 128k context, H100). Speedups are derived from Eq.~\eqref{eq:speedup}. Quality values are representative within-paper reported numbers on differing task suites \cite{liu2024kivi, hooper2024kvquant, zhang2023h2o, li2024snapkv}, plotted on a normalized retention axis for orientation rather than as a protocol-aligned comparison. The dashed frontier runs through the non-dominated quantization points and the composed 2-bit plus eviction configuration at $6.6\times$. Lossless domains (paging, prefix sharing, tiering) do not appear because they incur no quality cost.}
\label{fig:speedup_quality}
\end{figure}

The composed ceiling is set by physics, not by any single method. At W2 the KV-bound asymptote converts the full byte factor into speedup, so 2-bit storage with $r = 0.5$ retention and $\sigma = 2.91$ sharing approaches $46\times$ in capacity terms and $6.6\times$ in decode traffic terms. Realizing composed gains requires kernels that consume quantized, sparsified, paged layouts without materialization overhead, and schedulers that respect die-level placement. Those are systems problems, and they are the subject of Section~5.
\section{Cross-Cutting Challenges and Open Research Directions}
\label{sec:discussion}

Sections~3 and~4 built the analytical map and placed one method per domain on it. This section reads the map across domains, identifies the systems problems that block composed deployment, and states the open research directions that follow. Figure~\ref{fig:sok_radar} summarizes the five domain profiles across six capability dimensions. No domain dominates, which is the quantitative justification for treating the five domains as infrastructure layers rather than competitors.

% Figure 12. SoK radar. Domain profiles across six capability dimensions, 0 to 5.
% Translucent fills at low opacity sit below the lines: overlaps stay legible
% and the "composed design space covers the hexagon" story becomes visible.
\begin{figure}[t]
\centering
\begin{tikzpicture}[font=\scriptsize]
\def\R{2.5}
% grid rings; the outer ring (score 5) is emphasized
\foreach \v in {0.5,1.0,1.5,2.0} {
  \draw[papergrid, line width=0.4pt] (90:\v) \foreach \a in {30,-30,-90,-150,150} { to (\a:\v) } to (90:\v);
}
\draw[papergrid!55!paperaxis, line width=0.8pt] (90:2.5) \foreach \a in {30,-30,-90,-150,150} { to (\a:2.5) } to (90:2.5);
% spokes
\foreach \a in {90,30,-30,-90,-150,150} {
  \draw[papergrid!70!paperaxis, line width=0.4pt] (0,0) to (\a:\R);
}
% translucent fills first, largest polygon first so small tints stay visible
% vertex order: Byte reduction, Decode speedup W2, Capacity relief, Quality safety, TTFT and reuse, Maturity
\fill[paperprimary, opacity=0.10] (90:2.5) -- (30:2.5) -- (-30:2.0) -- (-90:1.5) -- (-150:0.5) -- (150:1.5) -- cycle;
\fill[papertertiary!80!black, opacity=0.10] (90:0.5) -- (30:0.5) -- (-30:2.5) -- (-90:2.5) -- (-150:0.5) -- (150:2.5) -- cycle;
\fill[paperaccent, opacity=0.10] (90:1.0) -- (30:0.5) -- (-30:2.0) -- (-90:2.5) -- (-150:2.5) -- (150:2.0) -- cycle;
\fill[paperquaternary!85!black, opacity=0.10] (90:0.5) -- (30:0.5) -- (-30:2.5) -- (-90:2.0) -- (-150:2.5) -- (150:1.0) -- cycle;
\fill[papersecondary, opacity=0.10] (90:1.5) -- (30:1.5) -- (-30:2.0) -- (-90:1.0) -- (-150:0.5) -- (150:1.0) -- cycle;
% polylines and markers above the fills
\draw[color=paperprimary, line width=1.0pt, mark=*, mark size=1.6pt] plot coordinates
  {(90:2.5) (30:2.5) (-30:2.0) (-90:1.5) (-150:0.5) (150:1.5) (90:2.5)};
\draw[color=papersecondary, line width=1.0pt, mark=square*, mark size=1.4pt] plot coordinates
  {(90:1.5) (30:1.5) (-30:2.0) (-90:1.0) (-150:0.5) (150:1.0) (90:1.5)};
\draw[color=papertertiary!80!black, line width=1.0pt, mark=triangle*, mark size=1.9pt] plot coordinates
  {(90:0.5) (30:0.5) (-30:2.5) (-90:2.5) (-150:0.5) (150:2.5) (90:0.5)};
\draw[color=paperaccent, line width=1.0pt, mark=diamond*, mark size=1.8pt] plot coordinates
  {(90:1.0) (30:0.5) (-30:2.0) (-90:2.5) (-150:2.5) (150:2.0) (90:1.0)};
\draw[color=paperquaternary!85!black, line width=1.0pt, mark=pentagon*, mark size=1.9pt] plot coordinates
  {(90:0.5) (30:0.5) (-30:2.5) (-90:2.0) (-150:2.5) (150:1.0) (90:0.5)};
% axis labels
\node[anchor=south, align=center] at (90:2.95) {Byte\\reduction};
\node[anchor=west, align=left] at (30:2.8) {Decode\\speedup W2};
\node[anchor=west, align=left] at (-30:2.8) {Capacity\\relief};
\node[anchor=north, align=center] at (-90:2.95) {Quality\\safety};
\node[anchor=east, align=right] at (-150:2.8) {TTFT and\\reuse gain};
\node[anchor=east, align=right] at (150:2.8) {Maturity and\\kernel support};

% legend, evenly spaced: sample width 0.55, label gap 0.12, item gap 0.45
\draw[color=paperprimary, line width=1.0pt, mark=*, mark size=1.6pt] plot coordinates {(-5.90,-3.85) (-5.35,-3.85)};
\node[anchor=west, text=paperprimary, font=\scriptsize\bfseries] at (-5.23,-3.85) {Quantization};
\draw[color=papersecondary, line width=1.0pt, mark=square*, mark size=1.4pt] plot coordinates {(-3.20,-3.85) (-2.65,-3.85)};
\node[anchor=west, text=papersecondary, font=\scriptsize\bfseries] at (-2.53,-3.85) {Eviction};
\draw[color=papertertiary!80!black, line width=1.0pt, mark=triangle*, mark size=1.9pt] plot coordinates {(-1.00,-3.85) (-0.45,-3.85)};
\node[anchor=west, text=papertertiary!80!black, font=\scriptsize\bfseries] at (-0.33,-3.85) {Paging};
\draw[color=paperaccent, line width=1.0pt, mark=diamond*, mark size=1.8pt] plot coordinates {(1.05,-3.85) (1.60,-3.85)};
\node[anchor=west, text=paperaccent, font=\scriptsize\bfseries] at (1.72,-3.85) {Prefix caching};
\draw[color=paperquaternary!85!black, line width=1.0pt, mark=pentagon*, mark size=1.9pt] plot coordinates {(3.85,-3.85) (4.40,-3.85)};
\node[anchor=west, text=paperquaternary!85!black, font=\scriptsize\bfseries] at (4.52,-3.85) {Tiering};
\end{tikzpicture}
\caption{Domain profiles across six capability dimensions, scored 0 to 5 from the matrix of Table~\ref{tab:sok_matrix} and the analysis of Sections~3 and~4. Byte reduction and decode speedup reflect Eq.~\eqref{eq:speedup} at W2. Quality safety scores five for lossless domains. No domain dominates. Only the composed design space covers the hexagon.}
\label{fig:sok_radar}
\end{figure}

\subsection{The Three-Regime Map Revisited}

The $(s, b)$ plane partitions into three regimes. Below $s^{*}(b)$ of Eq.~\eqref{eq:crossover_traffic}, weight traffic dominates and KV compression buys little. Between $s^{*}(b)$ and $s_{\mathrm{cap}}$ of Eq.~\eqref{eq:crossover_capacity}, KV traffic dominates and byte reductions convert one-to-one into speedup. Beyond $s_{\mathrm{cap}}$, no single-device placement exists and tiering or multi-device parallelism becomes mandatory. Every domain has a regime where it is the right tool and a regime where it is inert. Quantization and eviction act in the KV-bound regime. Paging and prefix sharing move the feasibility boundaries themselves. Tiering owns the capacity regime and the reuse path. Much of the contradictory folklore in the literature, where the same technique is reported at $1.2\times$ in one study and $4\times$ in another, is regime confusion, and Table~\ref{tab:sok_matrix} resolves it by construction.

\subsection{Compression-Aware Attention Kernels}

The largest gap between the analytical speedups of Table~\ref{tab:sok_matrix} and realized systems is the kernel layer. FlashAttention-class kernels assume contiguous BF16 tiles \cite{dao2022flashattention}. PagedAttention kernels accept block tables but operate at full precision \cite{kwon2023vllm}. Quantized caches require fused dequantization inside the attention inner loop, and QServe measured 20 to 90 percent runtime overhead when dequantization falls back to generic CUDA cores instead of a co-designed path \cite{lin2024qserve}. Eviction produces ragged per-head token sets, while kernels assume dense rectangular layouts, so the gather cost can consume the bandwidth saved. The open problem is a single kernel interface over paged blocks with per-block precision and a per-head retention mask. A solution has three properties. Block-level format descriptors, in-SMEM dequantization, and no HBM round trip for expanded data. Until such kernels exist, the realized speedup of every lossy domain will sit below its roofline bound.

\subsection{Scheduler Co-Design, Disaggregation, and Placement}

Continuous batching \cite{yu2022orca} keeps occupancy high but interacts with every domain. Eviction must decide when to score tokens inside a running batch. Prefix caching gains only under hit-rate-aware routing \cite{zheng2024sglang}. Prefill-decode disaggregation \cite{patel2024splitwise, zhong2024distserve} turns every request into a tiering event, since the KV cache crosses the cluster fabric between the prefill worker and the decode worker. At 128k context that transfer is $42$ GB per request, which costs $0.84$ s over a 400 Gbps Ethernet link and demands NVLink-class transport or a transport codec. Placement within a package is equally unmanaged. Section~3 showed that striping versus pinning is worth $2\times$ on B200 and $8\times$ on MI300X, yet no production serving stack exposes die or stack affinity in its block manager. Preemption, migration, and affinity are one scheduler design space, and the serving community currently treats them as three afterthoughts.

\subsection{Evaluation Gaps in the Published Evidence Base}

Four gaps recur. Variance is essentially unreported. Single-run numbers without confidence intervals dominate, and thermal and fragmentation state is almost never disclosed. Quality metrics fragment across perplexity, needle retrieval, LongBench, and downstream task suites, and the position sensitivity that determines eviction safety is rarely measured. Cost columns are incomplete. TTFT and decode rate are reported separately by few studies, format conversion and tuning costs are absent, and power draw is nearly invisible. Workloads are synthetic or undisclosed, which prevents the regime identification that Section~3 showed is the first-order determinant of every speedup claim. These gaps are the reason a paper such as this one must derive rather than aggregate, and they motivate benchmark designs that separate evidence classes by construction \cite{vonlaszewski2025carpentry}.

\subsection{Open Research Directions}

Five directions follow directly from the framework.

\begin{enumerate}[leftmargin=1.5em]
  \item \textbf{Unified KV representation.} One paged layout carrying per-block precision and a retention mask, so quantization, eviction, paging, and sharing compose without format conversion. The success test is a composed 2-bit, $r = 0.5$, $\sigma = 4$ configuration at 128k context running within 20 percent of the $64\times$ capacity bound of Section~2.
  \item \textbf{Topology-aware placement.} Die and stack affinity in the block manager, with striping policies matched to the package. The measurable target is recovery of the per-die ceiling gaps quantified in Section~3, up to $2\times$ on B200 and $8\times$ on MI300X.
  \item \textbf{Learned retention with guarantees.} Eviction policies with per-request quality bounds rather than aggregate benchmark retention, using attention-mass coverage guarantees in the style of conformal prediction, so that retrieval-critical requests can opt out of eviction selectively.
  \item \textbf{Codec co-design for tiering.} KV bitstream codecs matched to link bandwidth across PCIe, NVLink, and Ethernet, with decode-from-bitstream attention kernels so decompression never touches HBM.
  \item \textbf{A standing SoK benchmark.} A public workload suite in the spirit of W1 to W4, a measurement protocol separating measured, derived, and analytical claims, and a maintained matrix in the form of Table~\ref{tab:sok_matrix}. The benchmark in this paper is the seed.
\end{enumerate}

The memory wall decomposes into three, traffic, capacity, and interconnect, and each domain of the taxonomy addresses one of them cleanly and the others only by accident. Progress requires treating the KV cache as a first-class data structure with its own formats, placement, and transport, co-designed across the kernel, the runtime, and the cluster.
\section{Conclusion and Design Rules}
\label{sec:conclusion}

\subsection{Summary of Findings}

Decode-phase inference is memory bound by construction, and the binding resource shifts from weights to the KV cache at exactly computable context lengths. The arithmetic intensity of a decode step interpolates between $2b/w_{p}$ at short context and $2g/w_{kv}$ at long context, with batch size canceling out of the long-context limit. The traffic crossover $s^{*}(b)$ falls hyperbolically in batch size, so production long-context serving is KV bound essentially always. Hardware topology is not a detail. A KV cache pinned to one die of a B200 is served at half the advertised bandwidth, and on MI300X the penalty reaches $8\times$. The five mitigation domains scale orthogonal factors of one footprint model and compose multiplicatively. Evaluated under a fixed protocol, the same 2-bit quantization yields $1.25\times$ at batch one and $4.82\times$ at batch thirty-two, and the composed 2-bit plus eviction configuration reaches $6.6\times$ in decode traffic and $46\times$ in capacity. The wall is three walls: traffic, capacity, and interconnect, and each domain solves one cleanly.

\subsection{Design Rules for Domain Selection}

The framework reduces deployment selection to eight rules, applied in order.

\begin{enumerate}[leftmargin=1.5em]
  \item \textbf{Identify the regime first.} Compute $s^{*}(b)$ and $s_{\mathrm{cap}}$ from the model constants of Table~\ref{tab:model_kv} and the device constants of Table~\ref{tab:hardware} before evaluating any technique. Every speedup claim is regime-conditional.
  \item \textbf{Enable the lossless domains unconditionally.} Paging and prefix caching cost nothing in quality and should always be active. They move feasibility, not decode bytes.
  \item \textbf{Below $s^{*}$, do not compress the cache.} KV compression buys speedups approaching one in the weight-bound regime. Spend the optimization budget on prefill, batching, and weight precision instead.
  \item \textbf{Above $s^{*}$, quantize first.} Four-bit KV is near-free with per-channel keys and outlier isolation. Two-bit is viable only with structure-aware layouts, validated on position-sensitive probes before deployment.
  \item \textbf{Evict only when the workload tolerates it.} Default to $r \ge 0.5$ and require a needle-style evaluation before going lower. Retrieval-critical traffic should not be evicted at any retention ratio.
  \item \textbf{Beyond $s_{\mathrm{cap}}$, tier or shard.} Tiering is mandatory when nothing fits, and transport codecs belong on every cross-link transfer. Never serve per-step decode reads from a remote tier. PCIe sits $50\times$ to $125\times$ below HBM across the Table~\ref{tab:hardware} devices.
  \item \textbf{On multi-die packages, stripe pages.} Placement recovers up to $2\times$ on B200 and $8\times$ on MI300X per the ceilings of Eq.~\eqref{eq:tokrate}. Verify per-partition bandwidth, not the datasheet aggregate.
  \item \textbf{Re-evaluate on every change.} Model updates move the constants, hardware refreshes move $s_{\mathrm{cap}}$ and the ridge, and workload shifts move $s^{*}$ hyperbolically through the batch term. The map, not the point, is the durable artifact.
\end{enumerate}

Algorithm~\ref{alg:domain_selection} renders the same logic as a decision procedure.

\begin{algorithm}[t]
\caption{KV-cache domain selection}
\label{alg:domain_selection}
\footnotesize
\begin{algorithmic}[1]
\Require Model constants $(P, L, n_{q}, n_{kv}, d_{h})$, precisions $(w_{p}, w_{kv})$, workload $(s, b)$, device $(C_{\mathrm{hbm}}, \beta_{\mathrm{agg}}, D)$, quality budget $\epsilon_{q}$, retrieval sensitivity flag $\rho_{\mathrm{ret}}$
\Ensure Configuration $(w_{q}, r, \lambda, \mathrm{placement})$
\State $s^{*} \gets P w_{p} / (2 b L n_{kv} d_{h} w_{kv})$; \; $s_{\mathrm{cap}} \gets (C_{\mathrm{hbm}} - P w_{p}) / (2 L n_{kv} d_{h} w_{kv})$
\State Enable paging and prefix caching \Comment{lossless, always on}
\State $\lambda \gets (1, 0, 0)$ \Comment{default. fully HBM resident}
\If{$s > s_{\mathrm{cap}}$}
  \State Enable tiering with transport codec and set $\lambda$ by tier fill, or shard across devices
\EndIf
\If{$s < s^{*}$}
  \State $w_{q} \gets w_{kv}$; \; $r \gets 1$ \Comment{weight-bound. skip KV compression}
\Else
  \State $w_{q} \gets 4$ bit if $\epsilon_{q}$ tight, else $2$ bit with structure-aware layout
  \State $r \gets 1$ if $\rho_{\mathrm{ret}}$ else $0.5$
\EndIf
\State Stripe pages across all $D$ partitions
\State \Return $(w_{q}, r, \lambda, \mathrm{striped})$
\end{algorithmic}
\end{algorithm}

\subsection{Closing Remarks}

Test-time reasoning converts compute into context, and context into KV state. As chains of thought lengthen, the cache becomes the dominant serving cost, and the techniques that manage it become the dominant serving decisions. This paper has given those decisions a shared analytical frame. One footprint model, two crossover laws, three regimes, five domains, and an eight-rule selection procedure. The specific numbers will age as models and accelerators evolve. The equations will not, because they are accounting identities over FLOPs and bytes rather than fits to measurements. The KV cache is a first-class data structure with its own formats, placement, and transport. The field will be well served when kernels, schedulers, and benchmarks treat it that way.

\appendix
% ======================================================================
%  sections/appendix_a_derivations.tex
%  Appendix A: full derivations supporting Sections 3 and 4.
%
%  Wiring: in main.tex, after the \input of Section 6 and BEFORE the
%  bibliography lines (keep your existing IEEEtran style):
%    \appendix
%    \input{sections/appendix_a_derivations}
%    \input{sections/appendix_b_constants}
%    \input{sections/appendix_c_protocol}
%    \bibliographystyle{IEEEtran}
%    \bibliography{references}
%  If 06_conclusion.tex still ends with \bibliography/\bibliographystyle
%  lines, delete them there so the pair appears once, last.
%  No new packages required.
% ======================================================================

\section{Full Derivations}
\label{app:derivations}

This appendix expands the accounting of Section~3 to the terms the main text states as lower order, derives the prefill cost model behind the $56.9$ s recompute bound of Section~4, and evaluates the topology model of Eq.~\eqref{eq:beta_eff} per device. Every closed form reduces to Eqs.~\eqref{eq:decode_flops} and~\eqref{eq:decode_bytes} when the lower-order terms are dropped, and no main-text quantity changes.

\subsection{Per-Kernel FLOP and Byte Accounting for One Decode Step}
\label{app:derivations:kernels}

Table~\ref{tab:app_kernel_accounting} decomposes one decode step for one sequence at context length $s$ into its per-layer kernels. The weight rows are dense matrix-vector products against stored weights at width $w_{p}$. The attention rows are the score and value contractions of Eq.~\eqref{eq:decode_attention}, evaluated per query head against the shared key-value slices under a FlashAttention-class kernel that reads the cache exactly once and never spills the score matrix to HBM \cite{dao2022flashattention}.

\begin{table}[t]
\centering
\caption{Per-layer FLOP and byte accounting for one decode step, one sequence, context length $s$. Softmax is counted exactly in Eq.~\eqref{eq:app_flops_full} and omitted in the main text as lower order. KV bytes assume a single cache read; the write of the new token's key and value is a $1/s$ correction treated in Eq.~\eqref{eq:app_bytes_full}.}
\label{tab:app_kernel_accounting}
\footnotesize
\begin{tabular}{lrr}
\toprule
Kernel & FLOPs & Bytes moved \\
\midrule
QKV projection & $2 d d_{h} (n_{q} + 2 n_{kv})$ & $d d_{h} (n_{q} + 2 n_{kv})\, w_{p}$ \\
Scores $q K^{\top}$ & $2 s n_{q} d_{h}$ & $s n_{kv} d_{h}\, w_{kv}$ \\
Softmax & $3 n_{q} s$ & $0$ (on chip) \\
Value contraction & $2 s n_{q} d_{h}$ & $s n_{kv} d_{h}\, w_{kv}$ \\
Output projection & $2 d d_{h} n_{q}$ & $d d_{h} n_{q}\, w_{p}$ \\
MLP, SwiGLU & $6 d d_{ff}$ & $3 d d_{ff}\, w_{p}$ \\
\midrule
Sum over $L$ layers & $2P + 4 L n_{q} d_{h} s + 3 L n_{q} s$ & $P w_{p} + 2 L n_{kv} d_{h} s w_{kv}$ \\
\bottomrule
\end{tabular}
\end{table}

The weight rows sum per layer to $2 d \left( d_{h} (2 n_{q} + 2 n_{kv}) + 3 d_{ff} \right) = 2 P_{\mathrm{layer}}$, where
\begin{equation}
P_{\mathrm{layer}} = d\, d_{h}\, (2 n_{q} + 2 n_{kv}) + 3 d\, d_{ff},
\label{eq:app_p_layer}
\end{equation}
so the exact per-step FLOP count for a batch of $b$ sequences is
\begin{equation}
F_{\mathrm{full}}(s,b) = b \left( 2P + 4 L n_{q} d_{h} s + 3 L n_{q} s \right).
\label{eq:app_flops_full}
\end{equation}
Equation~\eqref{eq:decode_flops} is Eq.~\eqref{eq:app_flops_full} with the softmax term dropped.

\begin{remark}[Cost of the softmax omission]
\label{rem:softmax_omission}
The omitted term is a fraction $3/(4 d_{h})$ of the attention contraction count, $0.59$ percent at $d_{h} = 128$, and it vanishes relative to the weight term as $s \to 0$. Every statement in the main text that uses Eq.~\eqref{eq:decode_flops}, including Propositions~\ref{prop:ai_limits}--\ref{prop:speedup_ceiling}, therefore carries a relative error below $0.6$ percent.
\end{remark}

\subsection{Byte Accounting and Kernel Assumptions}
\label{app:derivations:bytes}

The byte model of Eq.~\eqref{eq:decode_bytes} counts two terms and discards two, all stated here explicitly. Counted are the weight read $P w_{p}$, once per step regardless of batch size, and the cache read $2 L n_{kv} d_{h} s w_{kv}$ per sequence, once per decode step under the single-pass kernel assumption. Discarded are the cache write of the new token's key and value, $2 L n_{kv} d_{h} w_{kv}$ per sequence, a relative correction of $1/s$ to the cache term ($8 \times 10^{-6}$ at $s = 128$k); and the activation and logit traffic, $O(b\,(d + |\mathcal{V}|)\, w_{p})$ per step, at most $0.26$ MB per sequence for Llama-3-70B against $182$ GB of step traffic at W1. The complete count is
\begin{equation}
B_{\mathrm{full}}(s,b) = P w_{p} + 2 b L n_{kv} d_{h} w_{kv} (s + 1) + O\!\left(b\,(d + |\mathcal{V}|)\, w_{p}\right),
\label{eq:app_bytes_full}
\end{equation}
which is Eq.~\eqref{eq:decode_bytes} to a relative error below $10^{-5}$ at the workloads of Section~4. Two further assumptions bound the model's scope. Dequantization is assumed off the critical path; Section~5 cites realized overheads of $20$ to $90$ percent when it is not \cite{lin2024qserve}. Perfect overlap of compute and memory is assumed, so step time is bytes over bandwidth rather than a sum of kernel phases.

\subsection{Prefill Cost and the Worked Recompute Bound}
\label{app:derivations:prefill}

Prefill encodes $s_{0}$ positions in parallel. The weight-matmul pass costs $2P$ per position, and causal attention at position $t$ costs $4 L n_{q} d_{h}\, t$, summing to
\begin{equation}
F_{\mathrm{prefill}}(s_{0}) = 2 P s_{0} + 2 L n_{q} d_{h}\, s_{0} (s_{0} + 1),
\label{eq:app_prefill}
\end{equation}
with softmax adding $\tfrac{3}{2} L n_{q} s_{0} (s_{0} + 1)$, again lower order. Wall-clock time at model FLOP utilization $\mu$ on a device with peak dense throughput $\Pi$ is
\begin{equation}
T_{\mathrm{prefill}}(s_{0}) = \frac{F_{\mathrm{prefill}}(s_{0})}{\mu\, \Pi}.
\label{eq:app_prefill_time}
\end{equation}
For Llama-3-70B at $s_{0} = 128{,}000$, the two terms of Eq.~\eqref{eq:app_prefill} are $2 P s_{0} = 1.792 \times 10^{16}$ FLOPs and $2 L n_{q} d_{h}\, s_{0} (s_{0} + 1) = 2.148 \times 10^{16}$ FLOPs, totaling $3.940 \times 10^{16}$. At $\mu = 0.70$ on H100 ($\Pi = 989.5$ TFLOP/s dense BF16), Eq.~\eqref{eq:app_prefill_time} gives
\[
T_{\mathrm{prefill}}(128\mathrm{k}) = \frac{3.940 \times 10^{16}}{0.70 \times 989.5 \times 10^{12}} = 56.9\ \mathrm{s},
\]
the recompute cost used for workloads W3 and W4 in Section~4. The quadratic term is $55$ percent of the total at 128k context, so prefill cost grows superlinearly: doubling the context multiplies recompute time by $3.1\times$, approaching $4\times$ asymptotically.

\begin{remark}[Warm-prefix TTFT: exact count versus the Section 4 estimate]
\label{rem:warm_prefix}
Appending a suffix of $s'$ tokens to a resident prefix of length $s_{p}$ costs, by differencing Eq.~\eqref{eq:app_prefill},
\begin{equation}
F_{\mathrm{prefill}}(s_{p} + s') - F_{\mathrm{prefill}}(s_{p}) = 2 P s' + 2 L n_{q} d_{h}\, s' (2 s_{p} + s' + 1),
\label{eq:app_suffix}
\end{equation}
where the cross term $4 L n_{q} d_{h} s' s_{p}$ is the suffix's attention over the resident prefix. For W3 ($s_{p} = 96$k, $s' = 32$k) the exact count is $1.39 \times 10^{16}$ FLOPs, or $20.0$ s at $\mu = 0.70$. The $8.4$ s figure of Section~4.5 retains only the suffix-local terms $2 P s' + 2 L n_{q} d_{h} s'^{2}$. The exact count reduces the warm-prefix TTFT advantage from $6.8\times$ to $2.8\times$ against the $56.9$ s cold prefill. No conclusion in Sections~4 or~5 changes: the stored-byte factor $\sigma = 2.91$ is unaffected, warm remains well below cold, and the $6.8\times$ figure may be retained as a suffix-local lower bound provided it is labeled as such.
\end{remark}

\subsection{Evaluation of the Topology Model}
\label{app:derivations:topology}

With $\beta_{p} = \beta_{\mathrm{agg}}/D$ and $\beta_{f}$ from Table~\ref{tab:hardware}, Eq.~\eqref{eq:beta_eff} evaluates as follows.

\noindent\textbf{H100} ($D = 1$). Monolithic package, so $\beta_{\mathrm{eff}} = \beta_{\mathrm{agg}} = 3.35$ TB/s for any placement.

\noindent\textbf{B200} ($D = 2$). $\beta_{p} = 4.0$ TB/s per die and $\beta_{f} = 10$ TB/s over NV-HBI. Striped placement ($\gamma = \tfrac{1}{2}$) completes a read of $x$ TB in $T_{\mathrm{read}} = \max(x/8, x/20)$ seconds, so $\beta_{\mathrm{eff}} = 8.0$ TB/s $= \beta_{\mathrm{agg}}$: the fabric outruns the per-die rate and striping is exactly sufficient, matching the condition of Proposition~\ref{prop:beta_eff} since $10 \ge (1 - \tfrac{1}{2}) \times 8$. Pinned placement ($\gamma = 1$) gives $\beta_{\mathrm{eff}} = \beta_{p} = 4.0$ TB/s, half the aggregate.

\noindent\textbf{MI300X} ($D = 8$). $\beta_{p} = 0.66$ TB/s per XCD stack slice. Pinned placement gives $\beta_{\mathrm{eff}} = 0.66$ TB/s, a factor $\beta_{\mathrm{agg}}/\beta_{p} = D = 8$ below the $5.3$ TB/s aggregate. Striped placement gives $\min(5.3, \tfrac{8}{7} \beta_{f})$; the on-package fabric rate is not publicly disclosed, and the striped figures in the main text assume it does not bind (Remark~\ref{rem:beta_scope}).

The resulting rate ceilings $R_{\infty} = \beta_{\mathrm{eff}} / 327{,}680$ bytes per token for Llama-3-70B are tabulated in Appendix~\ref{app:constants}.
% ======================================================================
%  sections/appendix_b_constants.tex
%  Appendix B: per-model and per-device constant arithmetic.
%  Wiring: \input after appendix_a_derivations. No new packages.
% ======================================================================

\section{Per-Model and Per-Device Constants}
\label{app:constants}

This appendix makes every constant in the paper reproducible from primary specifications. All byte quantities are decimal (Appendix~\ref{app:protocol}).

\subsection{Per-Token KV Bytes of Table~\ref{tab:model_kv}}
\label{app:constants:models}

Evaluating $2 L n_{kv} d_{h} w_{kv}$ at $w_{kv} = 2$ bytes per model:
\begin{align*}
\text{Llama-2-13B:} \quad & 2 \times 40 \times 40 \times 128 \times 2 = 819{,}200\ \mathrm{B/token} = 0.82\ \mathrm{MB} \\
\text{Mixtral 8x7B:} \quad & 2 \times 32 \times 8 \times 128 \times 2 = 131{,}072\ \mathrm{B/token} = 0.13\ \mathrm{MB} \\
\text{Llama-3-8B:} \quad & 2 \times 32 \times 8 \times 128 \times 2 = 131{,}072\ \mathrm{B/token} = 0.13\ \mathrm{MB} \\
\text{Llama-3-70B:} \quad & 2 \times 80 \times 8 \times 128 \times 2 = 327{,}680\ \mathrm{B/token} = 0.33\ \mathrm{MB} \\
\text{Llama-3.1-405B:} \quad & 2 \times 126 \times 8 \times 128 \times 2 = 516{,}096\ \mathrm{B/token} = 0.52\ \mathrm{MB} \\
\text{DeepSeek-V2:} \quad & 60 \times (512 + 64) \times 2 = 69{,}120\ \mathrm{B/token} = 0.07\ \mathrm{MB}
\end{align*}
The MLA row stores one latent vector of $d_{c} = 512$ dimensions plus $d_{r} = 64$ rotary dimensions per layer, shared across heads, so the head-count factor is absent; this is the source of the roughly $5\times$ constant reduction relative to the GQA rows \cite{deepseekai2024deepseekv2}. Footprints at 128k context follow by multiplying by $128{,}000$: $104.9$ GB (Llama-2-13B), $16.8$ GB (Mixtral 8x7B and Llama-3-8B), $41.9$ GB (Llama-3-70B), $66.1$ GB (Llama-3.1-405B), and $8.8$ GB (DeepSeek-V2).

\subsection{Parameter Counts Used in the Derivations}
\label{app:constants:params}

Table~\ref{tab:app_params} reconstructs $P$ from the layerwise count of Eq.~\eqref{eq:app_p_layer} plus input and output embeddings at vocabulary $|\mathcal{V}| = 128{,}256$ \cite{llama3ai2024llama3}. The paper's derived quantities use the rounded counts of the final column; the rounding moves $s^{*}$ and $s_{\mathrm{cap}}$ by under one percent relative to the reconstructed totals.

\begin{table}[t]
\centering
\caption{Parameter-count reconstruction. $L \cdot P_{\mathrm{layer}}$ from Eq.~\eqref{eq:app_p_layer}; embeddings count the input and output projections at $|\mathcal{V}|\, d$. Derived quantities in the paper use the rounded $P$ of the last column.}
\label{tab:app_params}
\footnotesize
\begin{tabular}{lrrrrrr}
\toprule
Model & $d$ & $d_{ff}$ & $L \cdot P_{\mathrm{layer}}$ & Embeddings & Reconstructed & $P$ used \\
\midrule
Llama-3-8B & 4096 & 14336 & $6.98 \times 10^{9}$ & $1.05 \times 10^{9}$ & $8.03 \times 10^{9}$ & $8.03 \times 10^{9}$ \\
Llama-3-70B & 8192 & 28672 & $68.45 \times 10^{9}$ & $2.10 \times 10^{9}$ & $70.55 \times 10^{9}$ & $70.0 \times 10^{9}$ \\
Llama-3.1-405B & 16384 & 53248 & $401.5 \times 10^{9}$ & $4.20 \times 10^{9}$ & $405.7 \times 10^{9}$ & $405 \times 10^{9}$ \\
\bottomrule
\end{tabular}
\end{table}

\subsection{Crossover and Capacity Arithmetic}
\label{app:constants:crossover}

The traffic crossover at batch one is $s^{*}(1) = P w_{p} / (2 L n_{kv} d_{h} w_{kv})$ with $w_{p} = w_{kv} = 2$ bytes: $1.606 \times 10^{10} / 131{,}072 = 122.5$k for Llama-3-8B, $1.40 \times 10^{11} / 327{,}680 = 427.2$k for Llama-3-70B, and $8.10 \times 10^{11} / 516{,}096 = 1.57$M for Llama-3.1-405B. Every other entry of Table~\ref{tab:crossover} is the $1/b$ scaling of Proposition~\ref{prop:crossover}(ii); for example, $427{,}246/32 = 13.4$k.

The capacity crossover of Eq.~\eqref{eq:crossover_capacity} divides free capacity by per-token bytes. Llama-3-8B ($P w_{p} = 16.06$ GB): $(80 - 16.06) \times 10^{9} / 131{,}072 = 488$k tokens on H100, and $(192 - 16.06) \times 10^{9} / 131{,}072 = 1.34$M on B200 and MI300X. Llama-3-70B ($P w_{p} = 140$ GB): the numerator is negative on H100, so the model does not fit at any context length, and $(192 - 140) \times 10^{9} / 327{,}680 = 159$k on B200 and MI300X. Llama-3.1-405B ($P w_{p} = 810$ GB): the numerator is negative on all three devices, forcing multi-device placement or weight quantization before context length enters the discussion.

\subsection{Ridge Points, Fixed Point, and Rate Ceilings}
\label{app:constants:rates}

The ridge point is dense BF16 throughput over aggregate bandwidth: $989.5/3.35 = 295$ FLOP/B on H100, $2250/8.0 = 281$ on B200, and $1307/5.3 = 247$ on MI300X. The intensity fixed point of Eq.~\eqref{eq:fixed_point} is $b^{\dagger} = g w_{p} / w_{kv} = 8$ for the GQA models in BF16. The rate ceilings of Corollary~\ref{cor:tokrate} divide effective bandwidth by the $327{,}680$ bytes per token of Llama-3-70B: $10.2$ M tokens per second on H100 ($3.35$ TB/s), $24.4$ M on B200 striped ($8.0$ TB/s), $12.2$ M on B200 pinned ($4.0$ TB/s), $16.2$ M on MI300X striped ($5.3$ TB/s), and $2.0$ M on MI300X pinned ($0.66$ TB/s).

\subsection{Sharing and Transfer Arithmetic of Section~4}
\label{app:constants:workloads}

W3 stores one shared $96$k prefix and eight unique $32$k suffixes: $(96{,}000 + 8 \times 32{,}000) \times 327{,}680 = 115.3$ GB against $8 \times 41.9 = 335.5$ GB unshared, hence $\sigma = 335.5/115.3 = 2.91$. W4 moves the $41.9$ GB cache across the links of Section~2: $41.9/64 = 0.66$ s over PCIe Gen5 x16, $41.9/(4 \times 64) = 0.16$ s with the $4\times$ transport codec \cite{liu2024cachegen}, and $41.9/(4 \times 900) = 12$ ms over NVLink with the codec. Against the $56.9$ s recompute of Appendix~\ref{app:derivations}, the fetch path is $87\times$ to $4{,}900\times$ faster, the range quoted in Section~4.5.     
% ======================================================================
%  sections/appendix_c_protocol.tex
%  Appendix C: notation, units, and measurement protocol.
%  Wiring: \input after appendix_b_constants. No new packages.
% ======================================================================

\section{Notation, Units, and Measurement Protocol}
\label{app:protocol}

\subsection{Notation}
\label{app:protocol:notation}

Tables~\ref{tab:app_notation_model} and~\ref{tab:app_notation_system} collect the symbols of the paper with the value each takes in the running example, Llama-3-70B in BF16 on one B200 with striped pages.

\begin{table}[t]
\centering
\caption{Model and attention symbols.}
\label{tab:app_notation_model}
\footnotesize
\begin{tabular}{clr}
\toprule
Symbol & Meaning & Running value \\
\midrule
$L$ & Layers & 80 \\
$d$ & Hidden size & 8192 \\
$d_{h}$ & Head dimension, $d / n_{q}$ & 128 \\
$n_{q}$ & Query heads per layer & 64 \\
$n_{kv}$ & Key-value heads per layer & 8 \\
$g$ & GQA group factor, $n_{q} / n_{kv}$ & 8 \\
$d_{ff}$ & MLP intermediate size & 28{,}672 \\
$d_{c}, d_{r}$ & MLA latent and rotary dimensions & 512, 64 \\
$P$ & Parameter count used in derivations & $7.00 \times 10^{10}$ \\
$|\mathcal{V}|$ & Vocabulary size & 128{,}256 \\
$w_{p}$ & Weight element width & 2 B \\
$w_{kv}$ & Cache element width, BF16 baseline & 2 B \\
$w_{q}$ & Cache element width after quantization & $\le 2$ B \\
$\tau$ & Paging block size & 16 tokens \\
\bottomrule
\end{tabular}
\end{table}

\begin{table}[t]
\centering
\caption{Workload, cost, and hardware symbols. References give the defining equation.}
\label{tab:app_notation_system}
\footnotesize
\begin{tabular}{clr}
\toprule
Symbol & Meaning & Running value \\
\midrule
$s$ & Context length, decimal tokens & 128{,}000 \\
$s_{0}$ & Prefill length & 128{,}000 \\
$b$ & Decode batch size & 1--32 \\
$c$ & KV byte factor of a technique & 1--16 \\
$r$ & Eviction retention fraction & 0.25--1 \\
$\sigma$ & Prefix sharing factor & 1--2.91 \\
$\phi$ & Fractional allocation waste & $< 0.04$ paged \\
$\lambda$ & Residency vector over (HBM, DRAM, NVMe) & $(1, 0, 0)$ \\
$F(s,b)$ & Per-step FLOPs, Eq.~\eqref{eq:decode_flops} & --- \\
$B(s,b)$ & Per-step bytes, Eq.~\eqref{eq:decode_bytes} & --- \\
$\mathrm{AI}$ & Arithmetic intensity $F/B$, Eq.~\eqref{eq:ai_general} & 2.6 (W1) \\
$I^{*}$ & Ridge point, compute over bandwidth & 281 FLOP/B \\
$\mu$ & Prefill model FLOP utilization & 0.70 \\
$\Pi$ & Peak dense BF16 throughput & 2250 TFLOP/s \\
$C_{hbm}$ & Device HBM capacity & 192 GB \\
$\beta_{\mathrm{agg}}$ & Aggregate HBM bandwidth & 8.0 TB/s \\
$\beta_{p}$ & Per-partition bandwidth, $\beta_{\mathrm{agg}}/D$ & 4.0 TB/s \\
$\beta_{f}$ & Die-to-die fabric bandwidth & 10 TB/s (NV-HBI) \\
$D$ & Package partitions & 2 \\
$\gamma$ & Local fraction of a KV read & --- \\
$\beta_{\mathrm{eff}}$ & Effective bandwidth, Eq.~\eqref{eq:beta_eff} & 8.0 TB/s \\
$s^{*}$ & Traffic crossover, Eq.~\eqref{eq:crossover_traffic} & 427.2k ($b = 1$) \\
$s_{\mathrm{cap}}$ & Capacity crossover, Eq.~\eqref{eq:crossover_capacity} & 159k \\
$b^{\dagger}$ & Intensity fixed point, Eq.~\eqref{eq:fixed_point} & 8 \\
$R_{\infty}$ & Rate-context ceiling, Eq.~\eqref{eq:tokrate} & $24.4$ M tok/s \\
$T_{\mathrm{step}}$ & Decode step time, $B / \beta_{\mathrm{eff}}$ & 22.7 ms (W1) \\
$\epsilon_{q}, \rho_{\mathrm{ret}}$ & Quality budget and retrieval flag, Algorithm~\ref{alg:domain_selection} & --- \\
\bottomrule
\end{tabular}
\end{table}

\subsection{Units and Numerical Conventions}
\label{app:protocol:units}

All units are decimal SI: k $= 10^{3}$, M $= 10^{6}$, GB $= 10^{9}$ bytes, TB/s $= 10^{12}$ bytes per second. Context lengths are decimal token counts: 128k means $128{,}000$ tokens, not $2^{17}$. Under binary units the 128k Llama-3-70B cache is $39.0$ GiB; the paper reports $41.9$ GB, and every derived quantity is computed in decimal units throughout. Vendor figures (dense tensor throughput, aggregate HBM bandwidth, fabric and link rates) are specifications used at face value, not measurements. Derived values are rounded to three significant figures, so a recomputed product may differ in the last digit from the displayed rounded operands. The symbols $s^{*}$, $s_{\mathrm{cap}}$, and $R_{\infty}$ refer to the BF16 cache ($w_{kv} = 2$ bytes) unless a precision is stated.

\subsection{Evidence Classes and Mixing Rules}
\label{app:protocol:evidence}

Every quantitative claim in the paper carries one of three evidence classes. The tables carry two of them, derived and reported, never mixed within a cell; the third, analytical, is the closed forms of Section~3 from which the derived values are computed.

\begin{enumerate}[leftmargin=1.5em]
  \item \textbf{Analytical.} Closed-form algebra over the cost model: Eqs.~\eqref{eq:decode_flops}--\eqref{eq:speedup} and Propositions~\ref{prop:ai_limits}--\ref{prop:beta_eff}. Exact given the assumptions of Appendix~\ref{app:derivations}.
  \item \textbf{Derived.} Point evaluations of the closed forms at the constants of Tables~\ref{tab:model_kv} and~\ref{tab:hardware}: Table~\ref{tab:crossover}, Table~\ref{tab:worked_decode}, and the W1 and W2 speedup columns of Table~\ref{tab:sok_matrix}. Derived values name their defining equation and assume perfect kernel efficiency.
  \item \textbf{Reported.} Within-paper values quoted from the cited literature with the benchmark named, confined to the quality column of Table~\ref{tab:sok_matrix} and the domain discussions of Section~4. Reported values are never compared across papers, because the underlying workloads differ.
\end{enumerate}

\subsection{Workload Parameterization and Measurement Assumptions}
\label{app:protocol:workloads}

Table~\ref{tab:app_workloads} restates the workloads of Table~\ref{tab:workloads} with full parameterization. All derived numbers in the paper additionally assume: (i) a FlashAttention-class attention kernel that reads the cache exactly once per step, with no recomputation and no score spill to HBM \cite{dao2022flashattention}; (ii) decode step time at the roofline bound $T_{\mathrm{step}} = B(s,b)/\beta_{\mathrm{eff}}$, with striped placement on multi-die packages unless stated; (iii) prefill at model FLOP utilization $\mu = 0.70$ of dense BF16 peak; (iv) dequantization and gather overheads excluded from the speedup columns, with realized overheads discussed in Section~5; and (v) quality deltas imported only as within-paper reported values.

\begin{table}[t]
\centering
\caption{Full workload parameterization. All workloads use Llama-3-70B (Table~\ref{tab:model_kv}) on B200 (Table~\ref{tab:hardware}) at 128k context.}
\label{tab:app_workloads}
\footnotesize
\begin{tabular}{cclll}
\toprule
ID & $b$ & Structure & Metrics & Binding equations \\
\midrule
W1 & 1 & No sharing & $T_{\mathrm{step}}$, tok/s & Eqs.~\eqref{eq:decode_bytes}, \eqref{eq:crossover_traffic} \\
W2 & 32 & No sharing & tok/s, speedup & Eqs.~\eqref{eq:ai_general}, \eqref{eq:speedup} \\
W3 & 8 & 96k shared prefix $+$ 32k unique & TTFT, stored bytes & Eqs.~\eqref{eq:kv_bytes_unified}, \eqref{eq:app_prefill_time} \\
W4 & 1 & Full reuse after 10-minute gap & fetch versus recompute & Eq.~\eqref{eq:app_prefill_time} and link rates of Section~2 \\
\bottomrule
\end{tabular}
\end{table}

\bibliographystyle{IEEEtran}
\bibliography{references}

@inproceedings{vaswani2017attention,
  title={Attention is All You Need},
  author={Vaswani, Ashish and Shazeer, Noam and Parmar, Niki and Uszkoreit, Jakob and Jones, Llion and Gomez, Aidan N. and Kaiser, {\L}ukasz and Polosukhin, Illia},
  booktitle={Advances in Neural Information Processing Systems (NeurIPS)},
  year={2017}
}

@article{llama3ai2024llama3,
  title={The Llama 3 Herd of Models},
  author={{AI@Meta}},
  journal={arXiv preprint arXiv:2407.21783},
  year={2024}
}

@article{williams2009roofline,
  title={Roofline: An Insightful Visual Performance Model for Multicore Architectures},
  author={Williams, Samuel and Waterman, Andrew and Patterson, David},
  journal={Communications of the ACM},
  volume={52},
  number={4},
  pages={65--76},
  year={2009}
}

@misc{nvidia2023h100,
  title={{NVIDIA H100 Tensor Core GPU Datasheet}},
  author={{NVIDIA Corporation}},
  year={2023},
  howpublished={\url{https://resources.nvidia.com/en-us-tensor-core/nvidia-tensor-core-gpu-datasheet}}
}

@misc{nvidia2024b200,
  title={{NVIDIA Blackwell B200 Tensor Core GPU Datasheet}},
  author={{NVIDIA Corporation}},
  year={2024},
  howpublished={\url{https://www.nvidia.com/en-us/data-center/b200/}}
}

@misc{amd2024mi300x,
  title={{AMD Instinct MI300X Accelerator Datasheet}},
  author={{Advanced Micro Devices}},
  year={2024},
  howpublished={\url{https://www.amd.com/en/products/accelerators/instinct/mi300/mi300x.html}}
}

@article{shazeer2019mqa,
  title={Fast Transformer Decoding: One Write-Head is All You Need},
  author={Shazeer, Noam},
  journal={arXiv preprint arXiv:1911.02150},
  year={2019}
}

@inproceedings{ainslie2023gqa,
  title={{GQA}: Training Generalized Multi-Query Transformer Models from Multi-Head Checkpoints},
  author={Ainslie, Joshua and Lee-Thorp, James and de Jong, Michiel and Zemlyanskiy, Yury and Lebr{\'o}n, Federico and Sanghai, Sumit},
  booktitle={Proceedings of EMNLP},
  year={2023}
}

@article{deepseekai2024deepseekv2,
  title={{DeepSeek-V2}: A Strong, Economical, and Efficient Mixture-of-Experts Language Model},
  author={{DeepSeek-AI}},
  journal={arXiv preprint arXiv:2405.04434},
  year={2024}
}

@inproceedings{liu2024kivi,
  title={{KIVI}: A Tuning-Free Asymmetric 2bit Quantization for {KV} Cache},
  author={Liu, Zirui and Yuan, Jiayi and Jin, Hongye and Zhong, Shaochen and Xu, Zhaozhuo and Braverman, Vladimir and Chen, Beidi and Hu, Xia},
  booktitle={Proceedings of ICML},
  year={2024}
}

@inproceedings{hooper2024kvquant,
  title={{KVQuant}: Towards 10 Million Context Length {LLM} Inference with {KV} Cache Quantization},
  author={Hooper, Coleman and Kim, Sehoon and Mohammadzadeh, Hiva and Mahoney, Michael W. and Shao, Yakun Sophia and Keutzer, Kurt and Gholami, Amir},
  booktitle={Advances in Neural Information Processing Systems (NeurIPS)},
  year={2024}
}

@inproceedings{lin2024qserve,
  title={{QServe}: {W4A8KV4} Quantization and System Co-design for Efficient {LLM} Serving},
  author={Lin, Yujun and Tang, Haotian and Yang, Shang and Zhang, Zhekai and Xiao, Guangxuan and Gan, Chuang and Han, Song},
  booktitle={Proceedings of MLSys},
  year={2025}
}

@inproceedings{xiao2023streamingllm,
  title={Efficient Streaming Language Models with Attention Sinks},
  author={Xiao, Guangxuan and Tian, Yuandong and Chen, Beidi and Han, Song and Lewis, Mike},
  booktitle={Proceedings of ICLR},
  year={2024}
}

@inproceedings{zhang2023h2o,
  title={{H2O}: Heavy-Hitter Oracle for Efficient Generative Inference of Large Language Models},
  author={Zhang, Zhenyu and Sheng, Ying and Zhou, Tianyi and Chen, Tianlong and Zheng, Lianmin and Cai, Ruisi and Song, Zhao and Tian, Yuandong and R{\'e}, Christopher and Barrett, Clark and Wang, Zhangyang and Chen, Beidi},
  booktitle={Advances in Neural Information Processing Systems (NeurIPS)},
  year={2023}
}

@inproceedings{li2024snapkv,
  title={{SnapKV}: {LLM} Knows What You are Looking for Before Generation},
  author={Li, Yuhong and Huang, Yingbing and Yang, Bowen and Venkitesh, Bharat and Locatelli, Acyr and Ye, Hanchen and Cai, Tianle and Lewis, Patrick and Chen, Deming},
  booktitle={Advances in Neural Information Processing Systems (NeurIPS)},
  year={2024}
}

@inproceedings{adnan2024keyformer,
  title={Keyformer: {KV} Cache Reduction through Key Tokens Selection for Efficient Generative Inference},
  author={Adnan, Muhammad and Arunkumar, Akhil and Jain, Gaurav and Nair, Prashant and Soloveychik, Ilya and Kamath, Purushotham},
  booktitle={Proceedings of MLSys},
  year={2024}
}

@inproceedings{ge2023fastgen,
  title={Model Tells You What to Discard: Adaptive {KV} Cache Compression for {LLMs}},
  author={Ge, Suyu and Zhang, Yunan and Liu, Liyuan and Zhang, Minjia and Han, Jiawei and Gao, Jianfeng},
  booktitle={Proceedings of ICLR},
  year={2024}
}

@inproceedings{kwon2023vllm,
  title={Efficient Memory Management for Large Language Model Serving with {PagedAttention}},
  author={Kwon, Woosuk and Li, Zhuohan and Zhuang, Siyuan and Sheng, Ying and Zheng, Lianmin and Yu, Cody Hao and Gonzalez, Joseph and Zhang, Hao and Stoica, Ion},
  booktitle={Proceedings of SOSP},
  year={2023}
}

@inproceedings{zheng2024sglang,
  title={{SGLang}: Efficient Execution of Structured Language Model Programs},
  author={Zheng, Lianmin and Yin, Liangsheng and Xie, Zhiqiang and Sun, Chuyue and Huang, Jeff and Yu, Cody Hao and Cao, Shiyi and Kozyrakis, Christos and Stoica, Ion and Gonzalez, Joseph E. and Barrett, Clark and Sheng, Ying},
  booktitle={Advances in Neural Information Processing Systems (NeurIPS)},
  year={2024}
}

@inproceedings{sheng2023flexgen,
  title={{FlexGen}: High-Throughput Generative Inference of Large Language Models with a Single {GPU}},
  author={Sheng, Ying and Zheng, Lianmin and Yuan, Binhang and Li, Zhuohan and Ryabinin, Max and Chen, Beidi and Liang, Percy and R{\'e}, Christopher and Stoica, Ion and Zhang, Ce},
  booktitle={Proceedings of ICML},
  year={2023}
}

@inproceedings{liu2024cachegen,
  title={{CacheGen}: {KV} Cache Compression and Streaming for Fast Large Language Model Serving},
  author={Liu, Yuhan and Li, Hanchen and Cheng, Yihua and Ray, Siddhant and Huang, Yuyang and Zhang, Qizheng and Du, Kuntai and Yao, Jiayi and Lu, Shan and Ananthanarayanan, Ganesh and Maire, Michael and Hoffmann, Henry and Holtzman, Ari and Jiang, Junchen},
  booktitle={Proceedings of ACM SIGCOMM},
  year={2024}
}

@misc{lmcache2024,
  title={{LMCache}: An Efficient {KV} Cache Layer for Enterprise-Scale {LLM} Inference},
  author={{LMCache Team}},
  year={2024},
  howpublished={\url{https://github.com/LMCache/LMCache}}
}

@article{jiang2024mixtral,
  title={Mixtral of Experts},
  author={Jiang, Albert Q. and Sablayrolles, Alexandre and Roux, Antoine and Mensch, Arthur and Savary, Blanche and Bamford, Chris and Chaplot, Devendra Singh and de las Casas, Diego and Hanna, Emma Bou and Bressand, Florian and others},
  journal={arXiv preprint arXiv:2401.04088},
  year={2024}
}

@article{touvron2023llama2,
  title={Llama 2: Open Foundation and Fine-Tuned Chat Models},
  author={Touvron, Hugo and Martin, Louis and Stone, Kevin and others},
  journal={arXiv preprint arXiv:2307.09288},
  year={2023}
}

@inproceedings{dao2022flashattention,
  title={{FlashAttention}: Fast and Memory-Efficient Exact Attention with {IO}-Awareness},
  author={Dao, Tri and Fu, Daniel Y. and Ermon, Stefano and Rudra, Atri and R{\'e}, Christopher},
  booktitle={Advances in Neural Information Processing Systems (NeurIPS)},
  year={2022}
}

@inproceedings{yu2022orca,
  title={Orca: A Distributed Serving System for Transformer-Based Generative Models},
  author={Yu, Gyeong-In and Jeong, Joo Seong and Kim, Geon-Woo and Kim, Soojeong and Chun, Byung-Gon},
  booktitle={Proceedings of OSDI},
  year={2022}
}

@inproceedings{patel2024splitwise,
  title={{Splitwise}: Efficient Generative {LLM} Inference Using Phase Splitting},
  author={Patel, Pratyush and Choukse, Esha and Zhang, Chaojie and Shah, Aashaka and Goiri, {\'I}{\~n}igo and Maleki, Saeed and Bianchini, Ricardo},
  booktitle={Proceedings of ISCA},
  year={2024}
}

@inproceedings{zhong2024distserve,
  title={{DistServe}: Disaggregating Prefill and Decoding for Goodput-Optimized Large Language Model Serving},
  author={Zhong, Yinmin and Liu, Shengyu and Chen, Junda and Hu, Jianbo and Zhu, Yibo and Liu, Xuanzhe and Jin, Xin and Zhang, Hao},
  booktitle={Proceedings of OSDI},
  year={2024}
}

@article{singh2026deployment,
  title={Optimizing {AI} Inference Across the Deployment Stack},
  author={Singh, Tejinder and Pflueger, J. and Mitra, J. and Lincourt, R. and Markow, M. and Patel, B. A.},
  journal={arXiv preprint arXiv:2609.10550},
  year={2026}
}

@article{xu2026kvstrategies,
  title={{KV} Cache Optimization Strategies for Scalable and Efficient {LLM} Inference},
  author={Xu, Y. and Khaira, N. K. and Singh, Tejinder},
  journal={arXiv preprint arXiv:2603.20397},
  year={2026}
}

@article{vonlaszewski2025carpentry,
  title={{AI} Benchmark Democratization and Carpentry},
  author={von Laszewski, Gregor and Brewer, W. and Thiyagalingam, J. and Papay, J. and Foundjem, A. and others},
  journal={arXiv preprint arXiv:2512.11588},
  year={2025}
}

@inproceedings{tang2024quest,
  title={{Quest}: Query-Aware Sparsity for Efficient Long-Context {LLM} Inference},
  author={Tang, Jiaming and Zhao, Yilong and Zhu, Kan and Xiao, Guangxuan and Kasikci, Baris and Han, Song},
  booktitle={Proceedings of ICML},
  year={2024}
}

@inproceedings{jiang2024minference,
  title={{MInference} 1.0: Accelerating Pre-filling for Long-Context {LLMs} via Dynamic Sparse Attention},
  author={Jiang, Huiqiang and Li, Yucheng and Zhang, Chengruidong and Wu, Qianhui and Luo, Xufang and Ahn, Surin and Han, Zhenhua and Abdi, Amir H. and Li, Dongsheng and Lin, Chin-Yew and Yang, Yuqing and Qiu, Lili},
  booktitle={Advances in Neural Information Processing Systems (NeurIPS)},
  year={2024}
}

@article{cai2024pyramidkv,
  title={{PyramidKV}: Dynamic {KV} Cache Compression based on Pyramidal Information Funneling},
  author={Cai, Zefan and Zhang, Yichi and Gao, Bofei and Liu, Yuliang and Li, Yucheng and Liu, Tianyu and Lu, Keming and Xiong, Wayne and Dong, Yue and Hu, Junjie and Xiao, Wen},
  journal={arXiv preprint arXiv:2406.02069},
  year={2024}
}

@inproceedings{yao2025cacheblend,
  title={{CacheBlend}: Fast Large Language Model Serving for {RAG} with Cached Knowledge Fusion},
  author={Yao, Jiayi and Li, Hanchen and Liu, Yuhan and Ray, Siddhant and Cheng, Yihua and Zhang, Qizheng and Du, Kuntai and Lu, Shan and Jiang, Junchen},
  booktitle={Proceedings of EuroSys},
  year={2025}
}

\end{document}